\newcommand{\der}{\vdash}
\newcommand{\prov}[3]{\triangleright\ #1\vdash_{#2}#3}
\newcommand{\sem}[2]{[\![#1]\!]_{#2}} 
\title{The $n$-dimensional Propositional Calculus}
\newcommand{\e}{\mathsf e}
\newtheorem{fact}{Fact}
\newcommand\TTTT{%
 \textsf{T\kern-0.15em\raisebox{-0.55ex}T\kern-0.15emT\kern-0.15em\raisebox{-0.55ex}2}%
}
\providecommand{\U}[1]{\protect\rule{.1in}{.1in}}
\newtheorem{theorem}{Theorem}[section]
\theoremstyle{plain}
\newtheorem{corollary}[theorem]{Corollary}
\newtheorem{definition}[theorem]{Definition}
\newtheorem{example}[theorem]{Example}
\newtheorem{lemma}[theorem]{Lemma}
\newtheorem{proposition}[theorem]{Proposition}
\numberwithin{equation}{section}
\title{The higher dimensional propositional calculus  
}
\author[A. Bucciarelli]{A. Bucciarelli}\address{A. Bucciarelli, P.-L.Curien, A. Salibra, CNRS and Universit\'e Paris Cit\'e}\email{buccia@irif.fr,\, curien@irif.fr,\, salibra@irif.fr}
\author[P.-L. Curien]{P.-L. Curien}
\author[A. Ledda]{A. Ledda}\address{A. Ledda, F. Paoli, Universit\`a di Cagliari}\email{antonio.ledda@unica.it,\, paoli@unica.it}
\author[F. Paoli]{F. Paoli}
\author[A. Salibra]{A. Salibra}
\date{\today}
\thanks{{\bf Corresponding author}: {\bf Antonio Bucciarelli}, {\ttfamily buccia@irif.fr}
}
\keywords{Boolean algebras, Boolean algebras of dimension $n$, Classical logic of dimension $n$, Sequent calculus, Cut elimination.\\
\emph{2010 Mathematics Subject Classification.} Primary: 08B25; Secondary: 08B05, 08A70.}
\begin{document}

\maketitle
\begin{abstract}
  In recent research, some of the present authors introduced the concept of an $n$-dimensional Boolean algebra and its corresponding propositional logic  $n\mathrm{CL}$, generalising the Boolean propositional calculus to $n\geq 2$ perfectly symmetric truth values. This paper presents a sound and complete sequent calculus for $n\mathrm{CL}$, named $n\mathrm{LK}$. We provide two proofs of completeness: one syntactic  and one semantic. The former implies as a corollary that $n\mathrm{LK}$ enjoys the cut admissibility property. The latter relies on the generalisation  to the $n$-ary case of the classical proof based on the Lindenbaum algebra of formulas and Boolean ultrafilters. 

\end{abstract}

\section{Introduction}
A research programme developed over the last few years \cite{first, second, SLP18, BS19, SBLP, SBLP2} has attempted to single out the features of Boolean algebras that let them encompass virtually all the ``desirable" properties one can expect from a variety of algebras. 

In the first part of this introduction, we summarise the main concepts and results obtained in this research. A Church algebra is an algebra $\mathbf{A}$ with term definable operations $q$ (ternary) and $0,1$ (nullary), such that for every $a,b\in A$, $q( 1,a,b) =a$ and $q( 0,a,b) =b$. The ternary operation $q$ intends to formalise the ``if-then-else" construct, widely employed in computer science. An element $e$ of a Church algebra $\mathbf{A}$ is called $2$\emph{-central} if $\mathbf{A}$ can be decomposed as the product $\mathbf{A}/\theta ( e,0) \times \mathbf{A}/\theta ( e,1) $, where $\theta ( e,0) $ ($\theta ( e,1) $) is the smallest congruence on $\mathbf{A}$ that collapses $e$ and $0$ ($e$ and $1$). According to a well-known result in the elementary structure theory of Boolean algebras, all elements of a Boolean algebra are $2$-central. More generally, Church algebras, where every element is $2$-central, were called \emph{Boolean-like algebras} in \cite{first}, since the variety of all such algebras in the language $( q,0,1) $ is term-equivalent to the variety of Boolean algebras. In \cite{SBLP, SBLP2} and in the present paper, on the other hand, they are called \emph{Boolean algebras of dimension }$2$. 

This approach can be generalised to algebras $\mathbf{A}$ having $n$
term definable elements  $\e_1,\ldots,\e_n$ and one
$(n+1)$-ary term definable operation $q$ satisfying $q(\e_i,x_1,\ldots,x_n)=x_i$, for all $i=1,\ldots,n$. Accordingly,  the notion of a $2$-central element can be extended to that of an $n$-central element,
that induces a decomposition of $\mathbf{A}$ into $n$, rather than just $2$, factors. In \cite{SBLP, SBLP2}, naturally enough, these algebras were called \emph{Church algebras of dimension }$n$ ($n\mathrm{CH}$s).
Free $\mathcal{V}$-algebras (for $\mathcal{V}$ a variety), rings with unit, semimodules over semirings --- hence, in particular, Boolean vector spaces --- give rise to $n\mathrm{CH}$s, where $n$ is generally greater than $2$.  Church algebras of dimension $n$, all of whose elements are $n$-central, were given the name of \emph{Boolean algebras of
dimension }$n$ ($n\mathrm{BA}$s). Since $n$-central elements are equationally definable, the $n\mathrm{BA}$s of a
specified type form a variety. Those $n\mathrm{BA}$s whose type includes just the $q$ operation and  the $n$ constants are called {\em pure}.
Varieties of  $n\mathrm{BA}$s happen to share many remarkable
properties with the variety of Boolean algebras. In particular:

\begin{itemize}
\item all subdirectly irreducible $n\mathrm{BA}$s of type $\tau$ have 
cardinality $n$; moreover, any $n\mathrm{BA}$ of type $\tau$ is a subdirect product of algebras of cardinality $n$;

\item any pure $n\mathrm{BA}$ is a subdirect power of the unique  pure $n\mathrm{BA}$ $\mathbf{n}$  of cardinality $n$, whose elements are considered as ``generalised truth values'';

\item for every $n\geq 2$ and type $\tau$, all $n\mathrm{BA}$s of type $\tau$ having  cardinality $n$ are primal. Moreover, every variety generated by an $n$-element primal algebra is a
variety of $n\mathrm{BA}$s.
\end{itemize}

In \cite{SBLP2} the theory of $n$-central elements was put to good use to yield an extension to
arbitrary semirings of the technique of \emph{Boolean powers}. We defined the
semiring power $\mathbf{A}[\mathbf{R}] $ of an algebra $\mathbf{A}$ by a semiring $\mathbf{R}$, and showed that any pure $n\mathrm{BA}$ $\mathbf{A}$ is a retract of the semiring power
$\mathbf{A}[ \mathbf{B}_{\mathbf{A}}] $ of $\mathbf{A}$ by what we called the inner Boolean algebra $\mathbf{B}_{\mathbf{A}}$
of $\mathbf{A}$. Foster's celebrated theorem on
primal algebras follows as a corollary from this result.

In \cite{BS19}  the connection between a noncommutative version of Boolean algebras, called skew Boolean algebras \cite{L89}, and pure $n\mathrm{BA}$s is explored. Moreover, the notion of a multideal, which plays an analogous role as filters and ideals do for Boolean algebras, is introduced and studied.

In \cite{SBLP} we focused on an application to logic. Just like Boolean algebras are the algebraic counterpart of classical propositional logic $\mathrm{CL}$, for every $n\geq 2$ we defined a logic $n\mathrm{CL}$ whose algebraic counterpart are $n\mathrm{BA}$s. We also proved that every tabular logic with a single designated value is a sublogic of some $n\mathrm{CL}$. Although we provided Hilbert-style calculi for each $n\mathrm{CL}$, the proof theory of these logics was not investigated in detail. 

In the present paper, we intend to develop more perspicuous calculi for the logics $n\mathrm{CL}$s, which may afford better insights into the behaviour of higher-dimensional logical connectives. The proof-theoretic framework of Gentzen's \emph{sequent calculi} appears as a promising candidate to achieve such a goal. 
While it is relatively easy to transform the sequent calculus for classical logic into a calculus for the equivalent logic $2\mathrm{CL}$, logics with a dimension greater than $2$ present trickier challenges.

To face these issues, we introduce the framework of \emph{higher-dimensional calculi}. Although our approach appears to be new, it presents similarities with other proof-theoretic methods available in the literature. In particular:
\begin{itemize}
    \item A successful generalisation of Gentzen's sequent calculi to finite-valued logics, dating back to Rousseau \cite{Rousseau}, consists in replacing $2$-sided sequents $\Gamma \Rightarrow \Delta$ by $n$-sided sequents $\Gamma_1 \mid ... \mid \Gamma_n$, one for each different truth value. The rules for each connective can be directly read off the truth table of the connective itself, which dictates what side of the conclusion-sequent should host the principal formula depending on the whereabouts of the auxiliary formulas in the premiss-sequents. Clearly, this process can be entirely automated: the programme Multlog (https://www.logic.at/multlog/) generates an $n$-sided calculus for any finite-valued logic whose truth tables are given as input \cite{BFZ}. Here, instead, we keep Gentzen's $2$-sided sequents but we split the turnstile $\vdash$ into $n$ different turnstiles $\vdash_i$ ($1 \leq i \leq n$), one for each truth value $\e_i$ of the calculus.
    \item Ordinary sequent calculi for a logic $\mathrm{L}$ aim at generating the $\mathrm{L}$-valid sequents and, in particular, the tautologies of $\mathrm{L}$; \emph{refutation systems} (see e.g. \cite{GPS} for a survey) aim at generating the $\mathrm{L}$-antivalid sequents and, in particular, the contradictions of $\mathrm{L}$. In the literature, a number of hybrid deduction-refutation systems have been proposed \cite{Goranko, Negri, Gore}, characterised by the presence of two different turnstiles $\vdash$ and $\dashv$ for deduction and refutation, respectively. The rules of these calculi, generally speaking, admit the simultaneous presence, as premisses or conclusions, of sequents formed with both turnstiles. In our calculus this idea is further generalised in so far as each truth value has its own associated turnstile.
    \item Hypersequents \cite{Avr} are finite multisets, or sequences, of ordinary sequents, each of which is said to be a \emph{component} of the hypersequent. While a sequent can be intuitively read as a claim to the effect that some formula in the consequent can be derived from the formulas in the antecedent, a hypersequent can be viewed as a disjunction of such derivability claims. Hypersequents are employed to obtain analytic calculi for several fuzzy, intermediate, relevant, and modal logics, for example (but not only) logics whose characteristic axioms have a disjunction as principal connective. \emph{Relational hypersequents} (see e.g. \cite{BaFer, CFM, CM}) generalise hypersequents in so far as different types of turnstiles can be used in different components. On some versions of the idea, components are expressions built by applying to formulas certain predicates of a first-order meta-language. It would be interesting to probe the affinities between this approach and the present one.
\end{itemize}

In this paper, for all $n\geq 2$, we define a sequent calculus $n\mathrm{LK}$ for the logic $n\mathrm{CL}$,
%
and we prove that $n\mathrm{LK}$ is sound and complete with respect to the natural $n$-valued semantics of $n\mathrm{LK}$ formulas. Soundness and completeness hold for all dimensions: we prove that the sequent $\Gamma\vdash_i\Delta$ is provable if and only if it is  valid, meaning that whenever an environment assigns the truth value $i$ to all the formulas in $\Gamma$, then it assigns the truth value $i$ to at least one formula in $\Delta$.
Our first proof of completeness is based on fact that the Lindenbaum algebra of $n\mathrm{LK}$ is an $n$BA. 
This semantic  proof relies on the cut rule of $n\mathrm{LK}$, which is needed in particular to show that equiprovability in all dimensions is an equivalence relation. 
We also provide another proof of completeness, based on syntactic properties of $n\mathrm{LK}$, in particular on the invertibility of some of the rules of the sequent calculus. This syntactic completeness result states in particular
that, whenever  a sequent  $\Gamma\vdash_i\Delta$   is  valid, then it is provable without using the cut rule,  thus ensuring that $n\mathrm{LK}$ enjoys the cut admissibility property.

\textbf{Outline of the paper}. In Section \ref{tutankamen} we recapitulate a few notions to make the present work reasonably self-contained. In particular, we give the definitons of Church algebras and of Boolean algebras of finite dimension, summarising the main results thus far obtained about them. In Section \ref{cleopatra} we introduce the sequent calculus $n\mathrm{LK}$. Section \ref{nefertiti} and its companion Section \ref{sec:class_cont} presents the case $n=2$.
 In section \ref{sec:closer} we study some syntatctic properties of $n\mathrm{LK}$, and we provide examples of higher dimensional proofs. Section \ref{sec:sem} presents the semantics of
$n\mathrm{LK}$, and the proof of soundness of the calculus. 
Section \ref{anubi} contains the semantic proof of completeness.
Some work is required, in particular, to show that the set of formulas of $n\mathrm{LK}$, quotiented by equiprovability, is the universe of a Boolean algebra of dimension $n$.
In Section \ref{sec:equiv} we prove that $n\mathrm{LK}$ is sound and complete for
$n\mathrm{CL}$, the propositional logic of the $n$BAs. 
As a corollary,  we have that  $n\mathrm{LK}$ is equivalent to the Hilbert-style axiomatisation of $n\mathrm{CL}$ presented in \cite{SBLP}. 
In Section \ref{sec:main} we provide a syntactic proof of completeness, 
 yielding the cut admissibility property.

\section{Preliminaries}\label{tutankamen}

 The proof of completeness of the  $n$-dimensional propositional calculus presented is Section \ref{anubi}
 relies on the fact that its Lindenbaum algebra, whose universe is the set of formulas quotiented by equiprovability, is  a
Boolean algebra of dimension $n$.
In this preliminary section we recall the main  definitions and facts  about  $n$BAs, making this paper reasonably self-contained.
For a start, let us introduce the concept of an $n$-dimensional Church algebra, from which the notion of an $n$-dimensional Boolean algebra has emerged.
The proofs of the results stated in this section can be found in 
\cite{SBLP}, \cite{SBLP2} and  \cite{BS19}.

\subsection{Church algebras of finite dimension\label{dobbiaco}}

In \cite{SBLP} we introduced \emph{Church algebras of dimension }$n$, algebras having $n$ designated elements $\e_1,\dots,\e_n$ ($n\geq 2$) and an operation $q$ of arity $n+1$
satisfying $q(\e_i,x_1,\dots,x_n)=x_i$.  The operator $q$ induces, through the so-called $n$-central elements, a decomposition of the algebra into $n$ factors.

\begin{definition}
\label{def:nCH}An algebra $\mathbf{A}$ of type $\tau $ is a \emph{Church algebra of
dimension} $n$ (an $n\mathrm{CH}$, for short) if there are term definable
elements $\e_{1}^{\mathbf{A}},\e_{2}^{\mathbf{A}},\dots ,\e_{n}^{\mathbf{A}}\in
A$ and a term operation $q^{\mathbf{A}}$ of arity $n+1$ such that, for all $%
b_{1},\dots ,b_{n}\in A$ and $1\leq i\leq n$, $q^{\mathbf{A}}(\e_{i}^{\mathbf{%
A}},b_{1},\dots ,b_{n})=b_{i}$. A variety $\mathcal{V}$ of type $\tau $ is a 
\emph{variety of algebras of dimension }$n$ if every member of $\mathcal{V}$
is an $n\mathrm{CH}$ with respect to the same terms $q,\e_{1},\dots ,\e_{n}$.
\end{definition}

{If $\mathbf{A}$ is an }$n\mathrm{CH}${,} then $\mathbf{A}_{0}=(A,q^{\mathbf{A}},\e^{\mathbf{A}}_{1},\dots ,\e_{n}^{\mathbf{A}})$ is the \emph{pure reduct} of $\mathbf{A}$.

Church algebras of dimension $2$ were introduced as Church algebras in \cite{MS08} and studied in \cite{first}.  Examples of Church algebras of dimension $2$ are Boolean algebras (with $q(x,y,z) =(x\wedge z)\vee (\lnot x\wedge y)$) or rings with unit (with $q( x,y,z) =xz+y-xy$). Next, we provide one example of Church algebras having dimension greater than $2${.} 
\begin{example}
\label{exa:partition} (\emph{$n$-Subsets}) Let $X$ be a set.
An \emph{$n$-subset} of $X$ is a sequence $(Y_{1},\dots ,Y_{n})$ of subsets 
 of $X$. We denote by $\mathrm{Set}_{n}(X)$ the family of all $n$-subsets of $X$.  $\mathrm{Set}_{n}(X)$ can be
 viewed as the universe of a Church algebra of dimension $n$, where
$\e_1=(X,\emptyset,\ldots,\emptyset), \e_2=(\emptyset,X,\emptyset,\ldots,\emptyset),\ldots,\e_n=(\emptyset,\ldots,\emptyset,X)$ and
 for all $\mathbf{y}^{0},\ldots,\mathbf{y}^{i}= (Y^i_{1},\dots ,Y^i_{n}),
 \ldots ,\mathbf{y}^{n}$:
$$
q( \mathbf{y}^0,\mathbf{y}^1,\dots ,\mathbf{y}^n) =(\bigcup\limits_{i=1}^{n}Y^0_{i}\cap Y_{1}^{i},\dots,\bigcup\limits_{i=1}^{n}Y^0_{i}\cap Y_{n}^{i}).
$$
\end{example}

{In \cite{vaggione}, Vaggione introduced the notion
of a \emph{central element} to study algebras whose complementary factor
congruences can be replaced by certain elements of their universes. If a
neat description of such elements is available, one usually gets important
insights into the structure theories of the algebras at issue. To list a few
examples, central elements coincide with central idempotents in rings with
unit, with complemented elements in $FL_{ew}$-algebras, which form the equivalent algebraic semantics of the full Lambek calculus with exchange and weakening, and with members of
the centre in ortholattices. In \cite{first}, T. Kowalski and three of the
present authors investigated central elements in Church algebras of dimension }$2$. 
In \cite{SBLP}, the idea was generalised to Church algebras of arbitrary finite dimension. 

\begin{definition}
\label{def:ncentral} If $\mathbf{A}$ is an $n\mathrm{CH}$, then $c\in A$
is called \emph{$n$-central} if the sequence of congruences $(\theta (c,\e_{1}),\dots
,\theta (c,\e_{n}))$ is an $n$-tuple of complementary factor congruences of $\mathbf{A}$. 
\end{definition}

The following characterisation of $n$-central elements, as well as the subsequent elementary result about them, were also proven in \cite{SBLP}.

\begin{theorem}
\label{thm:centrale} If $\mathbf{A}$ is an $n\mathrm{CH}$ of type $\tau $
and $c\in A$, then the following conditions are equivalent:

\begin{enumerate}
\item $c$ is $n$-central;

\item $\bigcap_{i\leq n}\theta (c,\e_{i})=\Delta $;

\item for all $a_{1},\dots ,a_{n}\in A$, $q(c,a_{1},\dots ,a_{n})$ is the
unique element such that $a_{i}\ \theta (c,\e_{i})\ q(c,a_{1},\dots ,a_{n})$,
for all $1\leq i\leq n$;

\item The following conditions are satisfied:
\begin{description}
\item[B1] $q(c,\e_{1},\dots ,\e_{n})=c$.

\item[B2] $q(c,x,x,\dots ,x)=x$ for every $x\in A$.

\item[B3] If $\sigma \in \tau $ has arity $k$ and $\mathbf x$ is an $n\times k$ matrix  of elements of $A$, of rows $\mathbf x_{1},\ldots,
\mathbf x_{n}$ and columns $\mathbf x^{1},\ldots,
\mathbf x^{k}$
  then
$
q(c,\sigma (\mathbf x_{1}),\dots ,\sigma (\mathbf x_{n}))=\sigma (q(c,\mathbf x^1),\dots ,q(c,\mathbf x^k)).$
\end{description}
\end{enumerate}
\end{theorem}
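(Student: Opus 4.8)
The plan is to establish the five conditions as equivalent by running the cycle $(1)\Rightarrow(2)\Rightarrow(3)\Rightarrow(4)\Rightarrow(5)\Rightarrow(1)$. Most of these arrows are nearly formal and rest on a single recurring device: once we know that $\bigcap_{i\le n}\theta(c,\e_i)=\Delta$, two elements that are $\theta(c,\e_i)$-related for \emph{every} $i$ must coincide, so any element that has been produced, for each $i$, as something $\theta(c,\e_i)$-related to a prescribed target is thereby uniquely determined. The genuinely structural step is $(5)\Rightarrow(1)$, where an honest $n$-fold product decomposition has to be built.

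I would argue $(1)\Rightarrow(2)$ by definition, since $\bigcap_{i\le n}\theta(c,\e_i)=\Delta$ is one of the defining requirements on a system of complementary factor congruences. For $(2)\Rightarrow(3)$: since $c\mathrel{\theta(c,\e_i)}\e_i$ holds by construction of $\theta(c,\e_i)$ and $q$ is a term operation, $q(c,a_1,\dots,a_n)\mathrel{\theta(c,\e_i)}q(\e_i,a_1,\dots,a_n)=a_i$ for every $i$, which is the existence clause of $(3)$, and uniqueness is the device above. For $(3)\Rightarrow(4)$: \textbf{B1} holds because $c$ itself is $\theta(c,\e_i)$-related to $\e_i$ for all $i$, so the uniqueness clause of $(3)$ forces $c=q(c,\e_1,\dots,\e_n)$; \textbf{B2} is the same argument with $x$ replacing all the $\e_i$; and for \textbf{B3} one uses that each $q(c,\mathbf x^{j})$ is $\theta(c,\e_i)$-related to $x_{ij}$, so, $\theta(c,\e_i)$ being a congruence, both $\sigma(q(c,\mathbf x^{1}),\dots,q(c,\mathbf x^{k}))$ and $q(c,\sigma(\mathbf x_1),\dots,\sigma(\mathbf x_n))$ are $\theta(c,\e_i)$-related to $\sigma(\mathbf x_i)$ for every $i$, and uniqueness forces them equal. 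For $(4)\Rightarrow(5)$ one matches \textbf{B1}--\textbf{B3} against the definition of a decomposition operator: \textbf{B1} is the normalisation $f_c(\e_1,\dots,\e_n)=c$, \textbf{B2} is idempotency, and \textbf{B3} says that $f_c$ commutes with every basic operation, which by a routine induction on terms upgrades to commutation with every term operation, in particular with $q$; supplying the remaining defining identities of a decomposition operator is then an unwinding of the definitions.

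The step I expect to be the main obstacle is $(5)\Rightarrow(1)$: converting the decomposition operator $f_c$ into a genuine factorisation of $\mathbf A$ into $n$ factors and recognising its factor congruences as exactly the $\theta(c,\e_i)$. I would handle this through the canonical homomorphism $h\colon\mathbf A\to\prod_{i\le n}\mathbf A/\theta(c,\e_i)$, $a\mapsto(a/\theta(c,\e_i))_{i\le n}$, and show that it is an isomorphism. Injectivity of $h$ is precisely $\bigcap_{i\le n}\theta(c,\e_i)=\Delta$, which has to be extracted from the hypothesis that $f_c$ is a decomposition operator with $f_c(\e_1,\dots,\e_n)=c$: the decomposition-operator identities pin $c$ down, under the factorisation induced by $f_c$, to the tuple of designated elements of the factors, whence the $i$-th induced factor congruence both contains and is collapsed by the pair $(c,\e_i)$, so it equals $\theta(c,\e_i)$, and factor congruences of a product have trivial intersection. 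Surjectivity of $h$, by contrast, comes for free from the Church structure: $h$ sends $q(c,a_1,\dots,a_n)$ to $(a_i/\theta(c,\e_i))_{i\le n}$, using once more that $c\mathrel{\theta(c,\e_i)}\e_i$ and that $q$ is a term. An isomorphism of exactly this shape is, by definition, a witness that $(\theta(c,\e_1),\dots,\theta(c,\e_n))$ is a system of complementary factor congruences, i.e. condition $(1)$. Equivalently, one may simply invoke the established correspondence between $n$-ary decomposition operators and systems of complementary factor congruences, as in \cite{vaggione} and \cite{SBLP}, and only check that $f_c(\e_1,\dots,\e_n)=c$ identifies the induced congruences with the $\theta(c,\e_i)$. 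Beyond this point, everything reduces to the uniqueness device of the first paragraph.
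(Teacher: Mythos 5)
The paper does not actually prove this theorem: it is imported verbatim from \cite{SBLP} (``The following characterisation \dots were also proved in \cite{SBLP}''), so there is no in-paper proof to compare yours against. That said, your cycle $(1)\Rightarrow(2)\Rightarrow(3)\Rightarrow(4)\Rightarrow(5)\Rightarrow(1)$ is the standard argument for this kind of statement (it is the $n$-ary version of the proof for Boolean-like algebras in \cite{first}), and every step you sketch is sound: the ``uniqueness device'' from $\bigcap_{i\le n}\theta(c,\e_i)=\Delta$ does carry $(2)\Rightarrow(3)\Rightarrow(4)$, and your identification of the factor congruences induced by $f_c$ with the $\theta(c,\e_i)$ in $(5)\Rightarrow(1)$ --- $\theta(c,\e_i)\subseteq\phi_i$ because $f_c(\e_1,\dots,\e_n)=c$ forces $c\,\phi_i\,\e_i$, and $\phi_i\subseteq\theta(c,\e_i)$ because $q(c,a,\dots,b,\dots,a)\mathrel{\theta(c,\e_i)}b$ --- is exactly right.

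The one place where ``an unwinding of the definitions'' hides real content is the rectangularity identity of a decomposition operator, $f_c(f_c(\mathbf x_1),\dots,f_c(\mathbf x_n))=f_c(x^1_1,\dots,x^n_n)$ (the paper's B4), inside $(4)\Rightarrow(5)$. Commuting $f_c$ with the term operation $q$ on an arbitrary matrix only yields a row--column transposition identity, not B4; to get B4 you must apply B3 (extended to the term $q$) to the $n\times(n+1)$ matrix whose $i$-th row is $(\e_i,x_{i1},\dots,x_{in})$, so that the left side collapses to the diagonal via $q(\e_i,\dots)=x_{ii}$ and the first column collapses to $c$ via B1. That instantiation is the only use of B1 in this implication and should be made explicit; with it, your proof is complete.
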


For any  $n$-central element $c$ and any $n\times n$ matrix $\mathbf x$  of elements of $A$, a direct consequence of (B1)-(B3) gives
\begin{description}
\item[B4]
$q(c,q(c,\mathbf x_{1}),\dots, q(c,\mathbf x_{n})) =q(c, x^1_1,x^2_{2}, \dots, x^n_{n})$.
\end{description}


\begin{proposition}\label{prop-closure} Let $\mathbf{A}$ be an $n\mathrm{CH}$. Then the set of all $n$-central elements of $\mathbf{A}$ is a subalgebra of the pure reduct of $\mathbf{A}$.
\end{proposition}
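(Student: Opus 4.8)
The statement has two parts: the set $C$ of all $n$-central elements of $\mathbf A$ must contain the constants $\e_1,\dots,\e_n$ of the pure reduct $\mathbf A_0$, and it must be closed under the $(n+1)$-ary operation $q$. The first part is immediate from Theorem \ref{thm:centrale}: taking $c=\e_j$, one has $\theta(\e_j,\e_j)=\Delta$, so $\bigcap_{i\le n}\theta(\e_j,\e_i)=\Delta$ and condition (2) applies. The plan for the second part is to isolate one ``commutation'' identity and then feed it into characterisation (4) via conditions B1--B3.

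The key lemma I would prove is: if $c,d_1,\dots,d_n$ are $n$-central and $e:=q(c,d_1,\dots,d_n)$, then for every $(a_1,\dots,a_n)=\vec a\in A^n$,
\begin{equation*}
  q(e,a_1,\dots,a_n)=q\bigl(c,\, q(d_1,\vec a),\dots,q(d_n,\vec a)\bigr). \tag{$\star$}
\end{equation*}
To establish $(\star)$ I would avoid computing with $q$ directly (its behaviour when the first argument is the compound term $q(c,d_1,\dots,d_n)$ is opaque) and instead route everything through the congruences $\theta(c,\e_i)$. Applying Theorem \ref{thm:centrale}(3) to $c$ and the tuple $(d_1,\dots,d_n)$ gives $d_i\ \theta(c,\e_i)\ e$ for every $i$; since $\theta(c,\e_i)$ is a congruence and $q$ is a term operation, it follows that $q(d_i,\vec a)\ \theta(c,\e_i)\ q(e,\vec a)$ for every $i$. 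Thus $q(e,\vec a)$ is an element congruent to $q(d_i,\vec a)$ modulo $\theta(c,\e_i)$ for all $i$ simultaneously, so the uniqueness clause of Theorem \ref{thm:centrale}(3), applied to $c$ and the tuple $(q(d_1,\vec a),\dots,q(d_n,\vec a))$, forces $q(e,\vec a)=q\bigl(c,q(d_1,\vec a),\dots,q(d_n,\vec a)\bigr)$, which is $(\star)$.

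With $(\star)$ in hand I would verify B1--B3 for $e$ and invoke Theorem \ref{thm:centrale}(4). For B2, instantiate $(\star)$ at $\vec a=(x,\dots,x)$ and use B2 for each $d_i$, then B2 for $c$. For B1, instantiate at $\vec a=(\e_1,\dots,\e_n)$ and use B1 for each $d_i$, then B1 for $c$. For B3, given $\sigma\in\tau$ of arity $k$ and an $n\times k$ matrix $\mathbf x$ (rows $\mathbf x_i$, columns $\mathbf x^j$), apply $(\star)$ with $a_i=\sigma(\mathbf x_i)$, push each $q(d_i,-)$ through $\sigma$ via B3 for $d_i$, then push $q(c,-)$ through $\sigma$ via B3 for $c$ applied to the matrix with entries $q(d_i,\mathbf x^j)$, and finally recognise each column expression $q\bigl(c,q(d_1,\mathbf x^j),\dots,q(d_n,\mathbf x^j)\bigr)$ as $q(e,\mathbf x^j)$ using $(\star)$ once more; this yields $q(e,\sigma(\mathbf x_1),\dots,\sigma(\mathbf x_n))=\sigma(q(e,\mathbf x^1),\dots,q(e,\mathbf x^k))$. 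Hence $e\in C$, so $C$ is closed under $q$ and contains each $\e_i$, i.e.\ it is a subuniverse of $\mathbf A_0$.

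I expect the only genuine obstacle to be $(\star)$; everything afterwards is bookkeeping, the B3 verification being the longest step but with each move forced either by $(\star)$ or by B3 for one of $c,d_1,\dots,d_n$. An alternative to the congruence argument for $(\star)$ would be to first extend B3 from basic operations $\sigma\in\tau$ to arbitrary term operations by induction on term complexity (the constant case using B2), and then take $t=q$; I would prefer the shorter congruence-based derivation given above.
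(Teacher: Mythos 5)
Your proof is correct. Note that the paper itself gives no proof of Proposition \ref{prop-closure} --- it is stated as an ``elementary result'' imported from \cite{SBLP} --- so there is no in-text argument to compare against; what you have written is a complete, self-contained derivation. The one genuinely non-trivial point is exactly the one you isolate: identity $(\star)$ cannot be read off B3 directly, because B3 as stated only quantifies over \emph{basic} operations $\sigma\in\tau$, whereas $q$ is merely term definable. Your congruence-based derivation of $(\star)$ is sound: Theorem \ref{thm:centrale}(3) applied to $c$ and $(d_1,\dots,d_n)$ gives $d_i\ \theta(c,\e_i)\ e$, compatibility of the term operation $q$ with congruences gives $q(d_i,\vec a)\ \theta(c,\e_i)\ q(e,\vec a)$, and the uniqueness clause of (3) (legitimate since $c$ is central) closes the argument. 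The alternative you mention --- extending B3 to all term operations by induction on terms and then specialising to $t=q$ --- is the more standard route and yields $(\star)$ as the transposed instance of the extended B3; either works. Two cosmetic remarks: in the B1 verification, after applying B1 to each $d_i$ the right-hand side is already $q(c,d_1,\dots,d_n)=e$ by definition, so no appeal to B1 for $c$ is needed; and for the constants, the observation $\theta(\e_j,\e_j)=\Delta$ combined with the equivalence (2)$\Leftrightarrow$(1) of Theorem \ref{thm:centrale} is indeed all that is required.
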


Hereafter, we denote by $\mathbf{Ce}_{n}(\mathbf{A})$ the algebra 
$(\mathrm{Ce}_{n}(\mathbf{A}),q,\e_{1},\dots ,\e_{n})$ 
of all $n$-central elements of an 
$n\mathrm{CH}$ $\mathbf{A}$.

\bigskip

\subsection{Boolean algebras of finite dimension\label{trallallero}}

Boolean algebras are Church algebras of dimension $2$ all of whose elements are 
$2$-central. It turns out that, among the $n$-dimensional Church algebras, those
algebras all of whose elements are $n$-central inherit many of the
remarkable properties that distinguish Boolean algebras. 

\begin{definition}
\label{mezzucci}An $n\mathrm{CH}$ $\mathbf{A}$ is called a \emph{Boolean
algebra of dimension $n$} ($n\mathrm{BA}$, for short) if every element of $A$
is $n$-central.
\end{definition}

By Proposition \ref{prop-closure}, the algebra $\mathbf{Ce}(\mathbf{A})$ of all $n$-central
elements of an $n\mathrm{CH}$ $\mathbf{A}$ is a canonical example of  $n\mathrm{BA}$. The class of all $n\mathrm{BA}$s of type $\tau $ is a variety of $n$CHs axiomatised by the identities
B1-B3 in Theorem \ref{thm:centrale}.

The following examples present two paradigmatic $n\mathrm{BA}$s, playing the 
role of the Boolean algebras of universe $\{0,1\}$ and $2^X$ for some set $X$,   respectively,
in the $n$-ary case.

\begin{example}\label{exa:n}
  The algebra $\mathbf{n}$ of universe $\{1,\ldots,n\}$ in the type of pure
 $n\mathrm{BA}$s such that $\e_i^\mathbf{n}=i$ and $q^{\mathbf{n}}(i,x_{1},\dots,x_{n}) =x_{i}$ 
 for every $i\leq n$ is a pure $n\mathrm{BA}$.
\end{example}

\begin{example}\label{exa:parapa} (\emph{$n$-Partitions}: Example \ref{exa:partition} continued) Let $X$ be a set. An \emph{$n$-partition} of $X$ is an $n$-subset $(Y_{1},\ldots ,Y_{n})$ of $X$ such that $\bigcup_{i=1}^{n}Y_{i}=X$ and $Y_{i}\cap Y_{j}=\emptyset $ for all $i\neq j$.
  The set of $n$-partitions of $X$ is closed under the $q$-operator defined in  Example \ref{exa:partition}
  and constitutes  the algebra of all $n$-central
elements of the $n\mathrm{CH}$ $\mathrm{Set}_{n}(X)$ of all $n$-subsets of $X$. Hence it is a pure  $n\mathrm{BA}$.
Notice that the algebra  of $n$-partitions of $X$, denoted by $\mathrm{Par}_{n}(X)$, is isomorphic to the $n\mathrm{BA}$ $\mathbf{n}^X$. In particular, the set of 2-partitions of a set X is nothing but the powerset of X in disguise.
\end{example}

Several remarkable properties of Boolean algebras find some analogue in the structure theory of $n\mathrm{BA}$s. 

\begin{theorem}
\label{lem:subirr} 
\begin{enumerate}
    \item An $n\mathrm{BA}$ $\mathbf{A}$ is subdirectly irreducible if and only if $|A|=n$.

    \item Every $n\mathrm{BA}$ $\mathbf{A}$ is isomorphic to a subdirect product of $\mathbf{B}_{1}^{I_{1}}\times \dots \times \mathbf{B}_{k}^{I_{k}}$ for some sets $I_{1},\dots ,I_{k}$ and some $n\mathrm{BA}$s $\mathbf{B}_{1},\dots ,\mathbf{B}_{k}$ of cardinality $n$.
    \item Every pure $n\mathrm{BA}$ $\mathbf{A}$ is isomorphic to a subdirect power of $\mathbf{n}^{I}$, for some set $I$.
\end{enumerate}
\end{theorem}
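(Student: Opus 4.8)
The plan is to deduce the whole statement from part (1) together with Birkhoff's subdirect representation theorem, after two elementary preliminary remarks. First, in any nontrivial $n\mathrm{BA}$ the designated elements $\e_1,\dots,\e_n$ are pairwise distinct: if $\e_i=\e_j$ with $i\neq j$, then $b_i=q(\e_i,b_1,\dots,b_n)=q(\e_j,b_1,\dots,b_n)=b_j$ for all $b_1,\dots,b_n$, forcing $|A|=1$; hence $|A|\ge n$ always. Second, on an $n$-element set the identities $q(\e_i,\vec b)=b_i$ already determine $q$ everywhere, since the first argument ranges over the whole universe; thus there is, up to isomorphism, a unique pure $n\mathrm{BA}$ of cardinality $n$, namely $\mathbf n$, and --- the type $\tau$ being finite, so that only finitely many algebras live on a fixed $n$-element set --- there are, up to isomorphism, only finitely many $n\mathrm{BA}$s of cardinality $n$.

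For (1), the direction $(\Leftarrow)$ I would prove by showing $\mathbf A$ is simple when $|A|=n$: then $A=\{\e_1,\dots,\e_n\}$, and if $\phi\neq\Delta$ is a congruence, $\e_i\mathrel{\phi}\e_j$ for some $i\neq j$; for arbitrary $k,l$, choosing $\vec b$ with $b_i=\e_k$, $b_j=\e_l$ and using compatibility of $\phi$ with $q$ in its first argument gives $\e_k=q(\e_i,\vec b)\mathrel{\phi}q(\e_j,\vec b)=\e_l$, so $\phi=\nabla$. For $(\Rightarrow)$, supposing $\mathbf A$ subdirectly irreducible and $|A|>n$, I would pick $c\in A\setminus\{\e_1,\dots,\e_n\}$; since $c$ is $n$-central, Theorem~\ref{thm:centrale}(2) gives $\bigcap_{i\le n}\theta(c,\e_i)=\Delta$, and since $\mathbf A$ has a monolith some $\theta(c,\e_i)$ must already equal $\Delta$, forcing $c=\e_i$ --- a contradiction; together with the first remark this gives $|A|=n$.

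Parts (2), (4) and (5) would then follow by applying Birkhoff's theorem to an arbitrary $n\mathrm{BA}$ $\mathbf A$, written as a subdirect product of subdirectly irreducible quotients $\mathbf A/\phi_j$ ($j\in J$) --- each an $n\mathrm{BA}$, the class being a variety axiomatised by B1--B3, hence of cardinality $n$ by (1). For (2) this presents $\mathcal V$ as generated by its members of cardinality $n$, finitely many up to isomorphism by the second remark, the only pure one being $\mathbf n$. For (4) I would partition $J=J_1\sqcup\dots\sqcup J_k$ by the (finitely many) isomorphism types $\mathbf B_1,\dots,\mathbf B_k$ of the quotients, so that $\mathbf A$ embeds subdirectly into $\prod_{j\in J}\mathbf A/\phi_j\cong\mathbf B_1^{J_1}\times\dots\times\mathbf B_k^{J_k}$. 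For (5), when $\mathbf A$ is pure each $\mathbf A/\phi_j$ is a pure $n\mathrm{BA}$ of cardinality $n$, hence $\cong\mathbf n$, so $\mathbf A$ is a subdirect power of $\mathbf n$.

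Finally, for (3) the inclusion $V(\mathbf E)\subseteq V(\mathbf A)$ is immediate since $\mathbf E$ is a subalgebra of $\mathbf A$; for the converse I would show $\mathbf A\in SP(\mathbf E)$. Taking once more a subdirect decomposition $\mathbf A\hookrightarrow\prod_j\mathbf A/\phi_j$ with each $|\mathbf A/\phi_j|=n$, I would look at the composite $h_j\colon\mathbf E\hookrightarrow\mathbf A\twoheadrightarrow\mathbf A/\phi_j$: its image contains the $n$ pairwise distinct elements $\e_1^{\mathbf A/\phi_j},\dots,\e_n^{\mathbf A/\phi_j}$, so $h_j$ is surjective, and since $|\mathbf E|=n=|\mathbf A/\phi_j|$ this forces $\ker h_j=\Delta$, so $h_j$ is an isomorphism; hence $\mathbf A$ is a subdirect product of copies of $\mathbf E$, giving $V(\mathbf A)\subseteq V(\mathbf E)$. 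I expect the only genuine content to be part (1) --- the simplicity argument for $(\Leftarrow)$ and the monolith argument for $(\Rightarrow)$ --- together with the finiteness-up-to-isomorphism remark, which relies on $\tau$ being finite; everything else is routine bookkeeping around Birkhoff's theorem.
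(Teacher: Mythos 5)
Your proof is correct; the paper itself states this theorem without proof (it is recalled from \cite{SBLP, SBLP2}), and your argument --- simplicity of the $n$-element $n\mathrm{BA}$s for one direction of (1), the monolith argument via Theorem \ref{thm:centrale}(2) for the other, and Birkhoff's subdirect representation theorem feeding parts (2)--(5) --- is the standard route and all steps check out. Your observation that the finiteness claims in (2) and (4) require the type $\tau$ to be finite (so that there are only finitely many isomorphism types of $n$-element members) is a legitimate point that the paper leaves implicit.
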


A subalgebra of the $n\mathrm{BA}$ $\mathrm{Par}_{n}(X)$ of the $n$-partitions on a set $X$, defined in Example \ref{exa:parapa},  is called a \emph{field of $n$-partitions on $X$}. The Stone representation theorem for $n\mathrm{BA}$s follows.

\begin{corollary} \label{cor:field-partitions}
Any pure $\mathrm{nBA}$ is isomorphic to a field of $n$-partitions on a suitable set $X$.
\end{corollary}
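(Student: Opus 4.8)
The plan is to derive Corollary~\ref{cor:field-partitions} directly from part~(5) of Theorem~\ref{lem:subirr}, which already tells us that every pure $n\mathrm{BA}$ $\mathbf{A}$ embeds as a subdirect power of $\mathbf{n}^{I}$ for some index set $I$. The only thing left to do is to reinterpret this embedding as an embedding into $\mathrm{Par}(X)$ for a suitably chosen set $X$. By Example~\ref{exa:parapa}, the algebra $\mathrm{Par}(X)$ of $n$-partitions of $X$ is isomorphic to the pure $n\mathrm{BA}$ $\mathbf{n}^{X}$. Hence a subalgebra of $\mathbf{n}^{I}$ is, up to this isomorphism, a subalgebra of $\mathrm{Par}(I)$, i.e.\ a field of $n$-partitions on $I$, and we may simply take $X=I$.

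Concretely, I would proceed as follows. First, fix a pure $n\mathrm{BA}$ $\mathbf{A}$ and invoke Theorem~\ref{lem:subirr}(5) to obtain a set $I$ and an embedding $\iota\colon \mathbf{A}\hookrightarrow \mathbf{n}^{I}$ whose image is a subdirect power. Second, recall from Example~\ref{exa:parapa} the explicit isomorphism $\varphi\colon \mathbf{n}^{X}\to \mathrm{Par}(X)$: an element $f\in \mathbf{n}^{X}$ (a function $f\colon X\to\{\e_1,\dots,\e_n\}$) is sent to the $n$-partition $(f^{-1}(\e_1),\dots,f^{-1}(\e_n))$ of $X$, and one checks this is a bijection commuting with $q$ and sending $\e_i^{\mathbf{n}^X}$ (the constant function $\e_i$) to the partition with $X$ in the $i$-th coordinate. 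Third, set $X=I$ and compose: $\varphi\circ\iota\colon \mathbf{A}\hookrightarrow \mathrm{Par}(X)$ is an embedding of pure $n\mathrm{BA}$s, so its image is a subalgebra of $\mathrm{Par}(X)$, that is, a field of $n$-partitions on $X$, isomorphic to $\mathbf{A}$.

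There is essentially no hard part here: the corollary is a packaging of Theorem~\ref{lem:subirr}(5) together with the identification $\mathrm{Par}(X)\cong \mathbf{n}^{X}$ already recorded in Example~\ref{exa:parapa}. The only point requiring a line of care is that the isomorphism $\varphi$ is an isomorphism of the \emph{pure} reducts (it need not, and cannot in general, respect any additional operations), which is exactly the setting of the statement since we are dealing with pure $n\mathrm{BA}$s throughout; and that subalgebras of $\mathbf{n}^{I}$ correspond under $\varphi$ to subalgebras of $\mathrm{Par}(I)$, which is immediate because $\varphi$ is an isomorphism. Thus the corollary follows.
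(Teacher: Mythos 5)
Your proof is correct and follows exactly the route the paper intends: Theorem~\ref{lem:subirr}(5) gives the embedding of a pure $n\mathrm{BA}$ into a power of $\mathbf{n}$, and the isomorphism $\mathbf{n}^{X}\cong\mathrm{Par}(X)$ recorded in Example~\ref{exa:parapa} turns its image into a field of $n$-partitions. The paper treats this as immediate and offers no further argument, so your write-up matches (and slightly elaborates) the intended proof.
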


\subsection{Multideals and ultramultideals}
In this section we recall from \cite{BS19} some definitions and facts about the notion of {\em multideal}, whose role in the present context is analogous to the one played by filters and ideals in Boolean algebras. As a matter of fact, Proposition \ref{prop:ext1}, needed in the proof of completeness of the
$n\mathrm{LK}$, is new. 

\begin{definition}\label{def:fide}
Let $\mathbf A$ be an $n\mathrm{BA}$. A \emph{multideal} is an $n$-subset $(I_1,\dots, I_n)$  of $A$ such that:
\begin{enumerate}
\item[(m1)] $\e_k\in I_k$, for $k=1,\ldots,n$;
\item[(m2)] If $a\in I_j$, $b\in I_k$ and $c_1,\dots,c_n\in A$ then  $q(a,c_1,\dots,c_{j-1},b,c_{j+1},\dots,c_n)\in I_k$, for all $1\leq j,k\leq n$;
\item[(m3)]  If $a\in A$ and $c_1,\dots,c_n\in I_k$ then $q(a,c_1,\dots, c_n)\in I_k$,
 for all $1\leq k\leq n$.
\end{enumerate}
The set $I=I_1\cup\ldots\cup I_n$ is called the {\em carrier} of the multideal.
An \emph{ultramultideal} of $\mathbf A$ is a multideal whose carrier is $A$.
\end{definition}

The multideal $(I_1,\dots, I_n)$ is {\em proper} if it is a $n$-partition
of its carrier. 
The unique multideal of $\mathbf A$ that is not proper is the tuple $(A,\ldots,A)$.

As a matter of notation, if $\mathbf A$ is  an $n\mathrm{BA}$ and $x,y,z\in A$ 
let us write $t_i(x,y,z)$ for $q(x,y/\overline i,z/i)=q(x,y,\ldots,y,z,y,\ldots y)$, $z$ being the $(i+1)$-th argument of $q$.

\begin{definition}
  Let $\mathbf A$ be an $n\mathrm{BA}$ and $1\leq i\neq j\leq n$.
  \begin{itemize}
    \item
  The {\em Boolean centre of $\mathbf A$ with respect to $i,j$}, denoted by $\mathbf{A}_{ij}$, is
  the Boolean algebra of $2$-central elements of the $2\mathrm{CH}$
  $(A, t_i , \e_i , \e_j )$.
  \item The {\em $(i,j)$-coordinates of $a\in A$} are the elements $a_{(k)} = t_k (a, \e_i , \e_j )\in A_{ij}$, for $1\leq k\leq n$.
   \end{itemize}
\end{definition}

The following lemma relates multideals of $\mathbf A$ to ideals/filters of
$\mathbf{A}_{ij}$.

\begin{lemma}\label{lem:l1}
  Let $(I_1,\ldots,I_n)$  be a proper multideal of $\mathbf A$ and $1\leq i\neq j\leq n$. 
  Then $I_\star = A_{ij} \cap  I_i$  is a Boolean ideal of $\mathbf{A}_{ij}$ and
$I^\star = A_{ij} \cap I_j$  is the Boolean filter complement of $I_\star$ .
\end{lemma}

The following lemma characterises multideals in terms of coordinates.

\begin{lemma}\label{lem:l2}
  Let $(I_1,\ldots, I_n)$ be a proper multideal of a $n\mathrm{BA}$ $\mathbf A$,
  $1\leq i\neq j\leq n$  and  $b\in A$. Then we have, for all $1\leq r\leq n$,  $b\in I_r$  if and only if the $(i,j)$-coordinate $b_{(r)}$ of $b$ belongs to $I_j$.
\end{lemma}

The main lemma toward Proposition \ref{prop:ext1} is the following one.
\begin{lemma}\label{lem:l3}
Let $(I_1 ,\ldots, I_n )$  be a proper multideal, $1\leq i\neq j\leq n$  and $U$ be a Boolean ultrafilter of $\mathbf{A}_{ij}$  that
extends $I^\star = A_{ij} \cap  I_j$. 
Then we have:
\begin{itemize}
  \item[(i)] For all $a\in A$, there exists a unique $k$  such that $a_{(k)}\in U$.
  \item[(ii)] For $1\leq k \leq n$ let  $G_k=\{a\in A: a_{(k)}\in U\}$. Then
    $(G_1,\ldots,G_n)$ is a proper ultramultideal which extends  $(I_1,\ldots,I_n)$.
\end{itemize}
\end{lemma}

The proof of the following proposition relies on the corresponding result for Boolean algebras and on the connections between multideals and ideals/filters established above.

\begin{proposition}\label{prop:ext1}
  Let $(I_1 , \ldots, I_n )$ be a multideal of $\mathbf A$, $1\leq k\leq n$ and
  $a\in A$ be such that $a\not\in I_k$. Then there exists a proper ultramultideal $(G_1,\ldots,G_n)$ extending $(I_1 , \ldots, I_n )$ such that $a\not\in G_k$.

\end{proposition}
\begin{proof}
  Let us choose $1\leq i\neq j\leq n$. Since
  $a\not\in I_k$, then  by Lemma \ref{lem:l2}   $a_{(k)}\notin I_j$. Since the Boolean filter  $I^\star=A_{ij}\cap I_j$ is included in $I_j$, then $a_{(k)}\not\in I^\star$.
  Let $U$ be an ultrafilter of $\mathbf{A}_{ij}$ containing $I^\star$ such that $a_{(k)} \notin U$\footnote{For the existence of such an ultrafilter, see for instance Theorem 3.17  in Chapter IV of \cite{BS}.}.
By Lemma \ref{lem:l3}(ii), the sequence $G_r = \{b\in A : b_{(r)} \in U\}$ ($1\leq r\leq n$) is a proper ultramultideal extending $(I_1 , \ldots, I_n )$ such that $a\not\in G_k$.
\end{proof}

\section{The  $n$-dimensional Propositional Calculus }\label{cleopatra}

In this section we introduce the $n$-dimensional propositional calculus,
$n\mathrm{LK}$ in what follows, whose deduction rules are given in Figure \ref{nLK2}.

The formulas of $n\mathrm{LK}$ are built up starting from $n$ different constants $\e_1,\ldots,\e_n$, denoting the generalised truth values $1,\ldots, n$.  Hence the denomination ``higher-dimensional calculus''.

A peculiar characteristic of $n\mathrm{LK}$ is that it has a unique connective, $q$,  of arity $n+1$. The intended meaning of this connective is the following:
the truth value of the formula $q(F,G_1,\ldots,G_n)$ is that  of $G_k$,
$k$ being the truth value of $F$. When $n=2$, this is the usual interpretation of 
$\mathrm{if\_then\_else(F,G_1,G_2)}$,  by renaming the truth values: $\mathrm{true}=1$,
$\mathrm{false}=2$.

Another distinctive   feature of this deductive system is 
that each dimension $i\in\hat n=\{1,2,\ldots,n\}$ has its own turnstile $\der_i$. 
In the $2$-dimensional case,
this gives rise to the turnstiles 
 $\der_1$ and $\der_2$, the former deriving tautologies, and the 
latter deriving contradictions.
Entailments in the various  dimensions are symmetric, in the sense expressed by the rule
$(\mathtt{Sym})$ of Figure \ref{nLK2}, that can be instantiated as follows, for $n=2$ and for a given propositional formula $F$:
$$
\infer
{\der_1\ F}
{\der_2 F^{(12)}}
$$
where $F^{(12)}$ 
is obtained by switching $\e_1$ and $\e_2$ in $F$ (see Definition \ref{def:perm}).
  Intuitively $F^{(12)}\equiv q(F,\e_2,\e_1)$, whereas
  $F\equiv q(F,\e_1,\e_2)$. Notice 
 that in the instantiation of the rule $(\mathtt{Sym})$ above we have used the fact that
  $(F^{(12)})^{(12)}=F$, shown in Lemma \ref{Lemma:involution}.
  .
  
In the case $n=2$ there is a unique way of switching the constants $\e_1$ and $\e_2$, corresponding to the classical negation, 
whereas in general there are $\binom n 2$ possible ``negations'', and more generally
there are $n!-1$ ways of perturbing a formula by permuting the constants $\e_1,\ldots,\e_n$ in it.
The permutations, and in particular the exchanges, are primitive in our syntax for
the formulas of $n\mathrm{LK}$.

\subsection{Syntax of  $n\mathrm{LK}$}

Let $S_n$ denote the group of permutations of $\hat n$, 
ranged over by $\pi,\rho,\sigma, \ldots$,
and  let $V=\{X,Y,Z,X',\ldots\}$ be a countable set of propositional 
variables.

\begin{definition}
A {\it formula} of  $n\mathrm{LK}$ is either:
\begin{itemize}
\item a  {\em decorated propositional variable} $X^\pi$, or 
\item one of the {\em constants}  $\e_1,\ldots,\e_n $, or 
\item a {\em compound formula} $q(F,G_1,\ldots,G_{n})$, where $F,G_1,\ldots,G_{n}$ are formulas.
\end{itemize}
We write $\mathcal F_n$ for the set of formulas.
\end{definition}

The choice of decorating propositional variables by permutations deserves 
some explanations. As a matter of fact,  
$X^\pi$ will be proven to be logically equivalent to $q(X,\e_{\pi(1)},\ldots,\e_{\pi(n)})$ (see Lemma \ref{Lemma:permutations}) and we could have stipulated
that atomic formulas are either  propositional variables or constants. Nevertheless,
the choice of decorating variables by permutations  eases the task of defining an involutive form of
negation. 

\begin{definition}\label{def:perm}
Given  $\rho\in S_n$ and a formula $F\in\mathcal F_n$, let $F^\rho$ be the formula 
inductively defined by:

$F^{\rho}=\left\{\begin{array}{ll}
X^{\rho\circ\pi}&\mbox{ if } F=X^\pi\\
\e_{\rho(k)}&\mbox{ if } F=\e_k\\
q(F,G_1^{\rho},\ldots,G_n^{\rho})&\mbox{ if } F=q(F,G_1,\ldots,G_n).
\end{array}
\right.
$
 \end{definition}

\begin{lemma}\label{Lemma:perm}
For each $F\in{\mathcal F}_n$, $\pi,\rho\in  S_n$:
$(F^{\pi})^\rho=F^{\rho\circ\pi}$ and $F^{id}=F$.
\end{lemma}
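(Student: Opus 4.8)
The plan is to prove both equations by structural induction on the formula $F$, following exactly the case split in the definition of $F^{\rho}$. Since the definition of the permutation action is given by recursion on the three syntactic shapes of formulas (decorated variable, constant, compound), an induction on $F$ is the natural proof strategy, and both statements of the lemma can be handled simultaneously by the same induction.

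For the first claim $(F^{\pi})^{\rho}=F^{\rho\circ\pi}$: in the base case $F=X^{\sigma}$, we have $F^{\pi}=X^{\pi\circ\sigma}$ and hence $(F^{\pi})^{\rho}=X^{\rho\circ(\pi\circ\sigma)}=X^{(\rho\circ\pi)\circ\sigma}=F^{\rho\circ\pi}$, where the middle step is just associativity of composition in $S_n$. In the other base case $F=\e_k$, we have $F^{\pi}=\e_{\pi(k)}$, so $(F^{\pi})^{\rho}=\e_{\rho(\pi(k))}=\e_{(\rho\circ\pi)(k)}=F^{\rho\circ\pi}$. For the inductive case $F=q(F_0,G_1,\dots,G_n)$, note that the action leaves the first argument untouched and acts on the remaining arguments, so $F^{\pi}=q(F_0,G_1^{\pi},\dots,G_n^{\pi})$ and then $(F^{\pi})^{\rho}=q(F_0,(G_1^{\pi})^{\rho},\dots,(G_n^{\pi})^{\rho})$; by the induction hypothesis applied to each $G_j$ this equals $q(F_0,G_1^{\rho\circ\pi},\dots,G_n^{\rho\circ\pi})=F^{\rho\circ\pi}$.

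For the second claim $F^{id}=F$: again induct on $F$. If $F=X^{\sigma}$ then $F^{id}=X^{id\circ\sigma}=X^{\sigma}=F$; if $F=\e_k$ then $F^{id}=\e_{id(k)}=\e_k=F$; if $F=q(F_0,G_1,\dots,G_n)$ then $F^{id}=q(F_0,G_1^{id},\dots,G_n^{id})=q(F_0,G_1,\dots,G_n)=F$ by the induction hypothesis on each $G_j$.

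There is no real obstacle here: the only subtlety worth flagging is the mild abuse of notation in the definition of $F^{\rho}$, where the clause for a compound formula $q(F,G_1,\dots,G_n)$ reuses the letter $F$ for the whole formula and for its first argument. One should read it as $q(F,G_1,\dots,G_n)^{\rho}=q(F,G_1^{\rho},\dots,G_n^{\rho})$, i.e. the permutation is not propagated into the first (``guard'') argument. Once this reading is fixed, the induction goes through mechanically, driven entirely by associativity and the identity law in the group $S_n$.
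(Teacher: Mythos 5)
Your proof is correct and follows exactly the route the paper takes, namely a structural induction on $F$ (the paper's own proof is just the phrase ``Easy induction on $F$''). Your reading of the compound-formula clause --- that the permutation acts only on the branches $G_1,\dots,G_n$ and not on the guard --- is the intended one, as confirmed by the paper's use of $q(G,H_1,\dots,H_n)^{(ij)}=q(G,H_1^{(ij)},\dots,H_n^{(ij)})$ in the proof of Lemma~\ref{Lemma:sym}.
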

\begin{proof}
Easy induction on $F$.
\end{proof}

The simplest non-trivial permutations are the  {\em exchanges}, noted  $(ij)$.
Given $i,j\in\hat n$, the formula $F^{(ij)}$ is 
the negation of $F$  relatively to the dimensions $i$ and $j$.
This kind of negation is involutive.

\begin{lemma}\label{Lemma:involution}
For each $F\in{\mathcal F}_n$,  $i,j\in\hat n$, we have that
${F^{(ij)}}^{(ij)}=F$.
\end{lemma}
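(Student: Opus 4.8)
The plan is to prove $ {F^{(ij)}}^{(ij)} = F $ by structural induction on $F$, using Lemma \ref{Lemma:perm} to reduce the whole statement to a single fact about the group $S_n$, namely that an exchange is its own inverse.

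First I would invoke Lemma \ref{Lemma:perm} with $\pi = \rho = (ij)$: it gives $(F^{(ij)})^{(ij)} = F^{(ij)\circ(ij)}$ for every formula $F$. So it suffices to observe that the transposition $(ij) \in S_n$ satisfies $(ij)\circ(ij) = \mathrm{id}$, which is immediate since $(ij)$ swaps $i$ and $j$ and fixes everything else, hence applying it twice restores every element. Then a second appeal to Lemma \ref{Lemma:perm}, this time the clause $F^{\mathrm{id}} = F$, finishes the argument: $F^{(ij)\circ(ij)} = F^{\mathrm{id}} = F$.

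Alternatively, if one prefers not to route everything through Lemma \ref{Lemma:perm} but to argue directly, the induction is equally short. In the base cases, for a decorated variable $F = X^\pi$ we have ${F^{(ij)}}^{(ij)} = X^{(ij)\circ(ij)\circ\pi} = X^{\mathrm{id}\circ\pi} = X^\pi = F$; for a constant $F = \e_k$ we have ${F^{(ij)}}^{(ij)} = \e_{(ij)((ij)(k))} = \e_k = F$, since $(ij)$ is an involution on $\hat n$. For the inductive step, if $F = q(F_0, G_1, \dots, G_n)$ then by definition $F^{(ij)} = q(F_0, G_1^{(ij)}, \dots, G_n^{(ij)})$ — note the head $F_0$ is left untouched by the superscript operation — and applying the superscript again yields $q(F_0, (G_1^{(ij)})^{(ij)}, \dots, (G_n^{(ij)})^{(ij)})$, which by the induction hypothesis applied to each $G_\ell$ equals $q(F_0, G_1, \dots, G_n) = F$.

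There is essentially no obstacle here: the only thing to be careful about is the recursive clause for compound formulas, where the outermost argument $F_0$ of $q$ does not carry the permutation superscript, so one should not mistakenly try to apply the induction hypothesis to it; the definition is well-formed precisely because the recursion descends only into the $G_\ell$. Given that, the statement is an immediate consequence of Lemma \ref{Lemma:perm} together with $(ij)^2 = \mathrm{id}$, so the proof can be stated in one or two lines.
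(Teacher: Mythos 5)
Your proposal is correct and matches the paper's proof, which simply cites Lemma \ref{Lemma:perm}: the identity $(F^{(ij)})^{(ij)}=F^{(ij)\circ(ij)}=F^{\mathrm{id}}=F$ is exactly the intended argument. The alternative direct induction you sketch is also fine but unnecessary given that lemma.
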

\begin{proof}
By Lemma \ref{Lemma:perm}, since $(ij)\circ (ij)=id$.
\end{proof}

Thus, in $n\mathrm{LK}$, the negations are {\em strongly} involutive in the sense that 
${F^{(ij)}}^{(ij)}$ and $F$ are {\em the same} formula.

Hence,  in particular, replacing ${F^{(ij)}}^{(ij)}$ by $F$ is not a semantic shortcut.
Those are nothing but two different ways of writing the same formula.
Throughout the paper, we will write $F$ for 
 ${F^{(ij)}}^{(ij)}$.

It is worth noticing that, by defining the size of a formula as
the maximal nesting level of compound formulas in it, $F$ and
$F^{(ij)}$ have the same size, for all formulas. 
In an inductive proof, when dealing 
with  $q(F,G_1,\ldots,G_n)$, the inductive  hypothesis may be applied to 
$F,G_1,\ldots,G_n$, and to $F^{(ij)},G_1^{(ij)},\ldots,G_n^{(ij)}$ as well.

{\em Contexts} ranged over by $\Gamma, \Delta,\ldots$ are finite multisets
of formulas, written as sequences. If $\Gamma=F_1,\ldots,F_n$, the notation 
$\Gamma^{(ij)}$ stands for $F_1^{(ij)},\ldots,F_n^{(ij)}$. Also, in some premises
of the deduction rules of Figure \ref{nLK2},
$\Big\{\Gamma_i\der_i \Delta_i\Big\}_{i\in I}$, $I\subseteq \hat n$, stands for a sequence
of $|I|$ premises, one for each $i\in I$.

The notation $$\prov \Gamma i \Delta$$ means that the sequent 
$\Gamma\der_i\Delta$ is provable, using the rules of Figure \ref{nLK2}.

\begin{figure}
\footnotesize{
\begin{center}
$\begin{array}{c}
\infer{\ }{ \der_i\e_i}\ \mathtt{Const}
\quad \quad 
\infer{\ \pi^{-1}(i)=\rho^{-1}(i) }{ X^\pi\der_i X^\rho}\ \mathtt{Id}
\quad \quad 
\infer
{\Gamma^{(ij)}\der_j\Delta^{(ij)}}
{\Gamma\der_i\Delta}\ \mathtt{Sym}

\\ \\

 \infer
{\Gamma\der_i F,\Delta\ \ i\not=j}
{\Gamma, F^{(ij)}\der_i\Delta}\ {\mathtt{NegL}}
\quad \quad 
\infer
{\Big\{\Gamma^{(ij)},F\der_j\Delta^{(ij)}\Big\}_{j\in\hat n\setminus\{i\}}}
{\Gamma\der_i F, \Delta} {\mathtt{NegR}}

\\ \\

\infer
{\Big\{\Gamma^{(ij)}, F,G_j^{(ij)}\der_j \Delta^{(ij)}
\Big\}_{j\in\hat n}}
{\Gamma, q(F,G_1,\ldots,G_{n})\der_i \Delta}\ {\mathtt{qL}}
\quad \quad \quad \quad 
\infer
{\Big\{\Gamma^{(ij)}, F\der_j G_j^{(ij)},\Delta^{(ij)}\Big\}_{j\in\hat n}
}
{\Gamma\der_i q(F,G_1,\ldots,G_n),\Delta}\ {\mathtt{qR}}

\\ \\
\infer
{\Gamma, F\der_i \Delta\ \ \   \Gamma\der_i F, \Delta}
{\Gamma\der_i \Delta}\ {\mathtt{Cut}}

\\ \\
\infer
{\Gamma\der_i\Delta }
{\Gamma,F \der_i \Delta}\ {\mathtt{WeakL}}\quad \quad 
\infer
{\Gamma\der_i\Delta}
{\Gamma\der_i F,\Delta}\ {\mathtt{WeakR}}

\\ \\
\infer
{\Gamma, F, F\der_i\Delta }
{\Gamma,F \der_i \Delta}\ {\mathtt{ConL}}\quad \quad 
\infer
{\Gamma\der_i F,F,\Delta}
{\Gamma\der_i F,\Delta}\ {\mathtt{ConR}}

\end{array}$
\caption{The system  $n\mathrm{LK}$. Indices
$i,j,k$ range over the set $\hat n=\{1,\ldots,n\}$  of dimensions, $\pi,\rho\in S_n$ and $X\in V$. $\Gamma,\Delta$ are finite multisets of formulas, represented as lists.
}
\label{nLK2}
\end{center}
}
\end{figure}

The deduction rules of the systems may be justified using the notion of $n$-partition presented in  Example \ref{exa:parapa}.

If $G^1=(G_1^1,\ldots,G^1_n),G^2,\ldots, G^r,F^1,\ldots F^s$ are $n$-partitions of a set $X$ and $1\leq i \leq n$ then, intuitively,  the sequent 
$G^1,\ldots, G^r \vdash_i  F^1,\ldots F^s$ states that

$$\bigcap_{l=1}^r G^l_i\subseteq \bigcup_{l=1}^s F^l_i
$$

This remark provides a handy guideline to verify the validity of the rules of figure \ref{nLK2}, by considering that:
\begin{itemize}

\item $\e_i$ stands for  the $n$-partition $(\emptyset,\ldots, \emptyset,X,\emptyset,\ldots, \emptyset)$, $X$ being at the $i$-th position.
\item If $Y=(Y_1,\ldots,Y_n)$ is an $n$-partition  and $\sigma$ is a permutation, then 
$Y^\sigma=(Y_{\sigma(1)},\ldots,Y_{\sigma(n)})$.
\item The operator $q_n$ acts on $n$-partitions as defined in Example \ref{exa:parapa}.
\end{itemize}

Let us look now at rule $(\mathtt{NegL})$, for instance: the premise states that the intersection
of the $i$-th components of the elements of $\Gamma$ is included in the union of the $i$-th component of
$F$ and of the $i$-th components of the elements of $\Delta$. Let us write this
as $\bigcap \Gamma_i \subseteq F_i\cup \bigcup \Delta_i$.
Now, $F_i$ and $F_j= F^{(ij)}_i$ are disjoint, 
since $i\not=j$. This ensures that 
$\bigcap \Gamma_i \cap F^{(ij)}_i\subseteq \bigcup \Delta_i$, which is  what is stated in the conclusion of the rule. This kind of argument applies to all the rules of $n\mathrm{LK}$.



\subsection{The classical case: syntax}\label{nefertiti}
Before going further, we compare the usual propositional calculus $\mathrm{PC}$ and  the
2-dimensional propositional calculus $2\mathrm{LK}$. 
Formulas of $\mathrm{PC}$ are built with the
constants 0,1 (written in the same way as the corresponding truth values), variables $X\in V$, and the connectives $\neg$, $\wedge$ and  $\vee$. The sequent calculus is given in Figure \ref{PC}.
It is sound and complete: a sequent $\Gamma \der \Delta$ is
provable iff it is valid. 
We have chosen a Ketonen-style formulation 
which is close
to the $2\mathrm{LK}$ sequent calculus.

\begin{figure}
\footnotesize{
\begin{center}
$\begin{array}{c}
\infer{\ }{ \der 1}\ (\mathtt{Const1})
   \quad
   \quad
\infer{\ }{0  \der }\ (\mathtt{Const0})

  \quad
   \quad
   
\infer{\ }{ P \der P}\ (\mathtt{Id})

\\ \\

 \infer
{\Gamma, P,Q \der \Delta}
{\Gamma, P\wedge Q \der \Delta}\ {(\mathtt{\wedge L})}
\quad \quad 
 \infer
{\Gamma\der P,\Delta \quad \Gamma\der Q,\Delta}
{\Gamma \der P\wedge Q, \Delta}\ {(\mathtt{\wedge R})}

\\ \\

  \infer
{\Gamma, P\der \Delta \quad \Gamma,Q\der \Delta}
{\Gamma, P\vee Q \der \Delta}\ {(\mathtt{\vee L})}
\quad \quad 

   \infer
{\Gamma \der P,Q,\Delta}
{\Gamma \der P\vee Q, \Delta}\ {(\mathtt{\vee R})}

\\ \\

  \infer
{\Gamma \der P, \Delta}
{\Gamma, \neg P \der  \Delta}\ {(\mathtt{\neg L})}

\quad \quad 
\infer
{\Gamma, P \der  \Delta}
{\Gamma  \der \neg P, \Delta}\ {(\mathtt{\neg R})}

\\ \\
\infer
{\Gamma, F\der \Delta\ \ \   \Gamma\der F, \Delta}
{\Gamma\der \Delta}\ {(\mathtt{Cut})}

\\ \\
\infer
{\Gamma\der \Delta }
{\Gamma,F \der  \Delta}\ {(\mathtt{WeakL})}\quad \quad 
\infer
{\Gamma\der \Delta}
{\Gamma\der  F,\Delta}\ {(\mathtt{WeakR})}

\\ \\
\infer
{\Gamma, F, F\der \Delta }
{\Gamma,F \der \Delta}\ {(\mathtt{ConL})}\quad \quad 
\infer
{\Gamma\der F,F,\Delta}
{\Gamma\der F,\Delta}\ {(\mathtt{ConR})}

\end{array}$
\caption{The Propositional Calculus.}
\label{PC}
\end{center}
}
\end{figure}

In order to compare $\mathrm{PC}$ and $2\mathrm{LK}$, we provide in Figure \ref{Transl}
translations from $\mathrm{PC}$ formulas to $2\mathrm{LK}$ formulas and conversely.
As a matter of notation, $P,Q,R$ denote formulas of $\mathrm{PC}$, and $F,G,H$ formulas of $2\mathrm{LK}$, and we write $\Gamma^\circ$ and $\Gamma^\bullet$ to denote the translations of sequences of formulas.
Notice that, for $n=2$ there exist two permutations of $\hat n$: the identity $id$ and the exchange $(12)$. 

\begin{figure}
\footnotesize{
\begin{center}
  \begin{tabular}{l|l}
    \begin{tabular}{l}
 $0^\circ=\e_2\ \ \ $ \\
 $1^\circ=\e_1\ \ \ $\\
      $X^\circ=X^{id}\ \ \ $\\
  $(\neg P)^\circ= (P^\circ)^{(12)}\ \ \ $\\
      $(P\wedge Q)^\circ= q(P^\circ,Q^\circ,\e_2)\ \ \ $ \\
     $(P\vee Q)^\circ= q(P^\circ,\e_1,Q^\circ)\ \ \ $
 \end{tabular}
      &
       \begin{tabular}{l}  
$\ \ \ \e_2^\bullet=0$ \\
 $\ \ \ \e_1^\bullet=1$ \\ 
         $\ \ \  (X^{id})^\bullet=X$ \\
 $\ \ \  (X^{(12)})^\bullet=\neg X$\\
         
 $\ \ \  q(F,G,H)^\bullet= (F^\bullet \wedge G^\bullet)\vee (\neg F^\bullet \wedge H^\bullet) $\\

       \end{tabular}

  \end{tabular}
\caption{The translations $2\mathrm{LK}$${\leftrightarrow}$$\mathrm{PC}$.}
\label{Transl}
\end{center}
}
\end{figure}


In Section  \ref{sec:class_cont} we will show that $\mathrm{PC}$ and $2\mathrm{LK}$ are equivalent in the
sense that:
\begin{enumerate}
\item Given two sequences of $\mathrm{PC}$ formulas $\Gamma,\Delta$, we have $\prov \Gamma {} \Delta$ iff $\prov {\Gamma^\circ} 1 {\Delta^\circ}$;
  \item Given two sequences of $2\mathrm{LK}$  formulas $\Gamma,\Delta$, 
we have $\prov \Gamma {1} \Delta$ iff $\prov {\Gamma^\bullet} {} {\Delta^\bullet}$.
\end{enumerate}
Since both calculi are sound and complete with respect to  suitable notions
of validity of sequents, the easiest way to prove this equivalence is through
the semantics. 

We conclude this section by presenting two examples of simulations
of $\mathrm{PC}$ by $2\mathrm{LK}$ and vice-versa. 
When translating  sequents of the form  $\Gamma \vdash_2\Delta $ from $2\mathrm{LK}$ to $\mathrm{PC}$, we use the following property,
that is an easy consequence of the second item above, by using the rule $\mathtt{Sym}$ of $2\mathrm{LK}$:
\begin{equation}\label{fact:sym}
 \forall\  \Gamma,\Delta\text{ sequences of $2\mathrm{LK}$ formulas, we have }  \prov \Gamma 2 \Delta \text{ iff }\prov {{\Gamma^{(12)}}^\bullet} {} {{\Delta^{(12)}}^\bullet}
\end{equation}

\begin{example}
  \begin{itemize}
    \item
Consider the following instance of the  rule $\mathtt{\wedge R}$ of $\mathrm{PC}$:
$$ \infer
{\der X \quad \der Y}
{\der X\wedge Y}\ {(\mathtt{\wedge R})}
$$
where $X$ and $Y$ are propositional variables.
The  $2\mathrm{LK}$ formula  $(X\wedge Y)^\circ= q(X^{id}, Y^{id},\e_2)$
may be proven as follows, under the hypothesis $\der_1 X^{id}$ and  $\der_1 Y^{id}$
$$ \infer
{
  \infer {\der_1 Y^{id}} {X^{id}\der_1 Y^{id}}\ {(\mathtt{WeakL})}
  \quad
  \infer {\infer{\infer {\der_1X^{id} } {X^{(12)}\der_1} \ {(\mathtt{NegL})} }
{X^{id}\der_2}{(\mathtt{Sym})}
  }  {X^{id}\der_2 \e_1}\ {(\mathtt{WeakR})}
}
{\der_1 q(X^{id}, Y^{id},\e_2)}\ {(\mathtt{qR})}
$$

\item The left rule for $\wedge$ 
$$ \infer
{X,Y\der }
{X\wedge Y \der} \ {(\mathtt{\wedge L})}
$$
is translated as follows:

$$ \infer
{ \infer {\ }
  {
    X^{id}, Y^{id} \der_1
    }  \ {(\mathtt{Id})}
  \quad
  \infer {\infer {\infer {\infer{\ } {\der_1 \e_1}  \ {(\mathtt{Const})}  } {\e_2\der_1} \ {(\mathtt{NegL})}}{\e_1\der_2}\ {(\mathtt{Sym})}        }  {X^{id}, \e_1\der_2}\ {(\mathtt{WeakL})}
}
{   q(X^{id},Y^{id},\e_2 )  \der_1}\ {(\mathtt{qL})}
$$

\end{itemize}
\end{example}






\begin{example}

 \begin{itemize}

    \item
Consider the following  instance of the rule $\mathtt{qR}$ of $2\mathrm{LK}$:
$$ \infer
{X^{id}\der_1 Y ^{id} \quad    X^{id}\der_2 Z^{(12)} }
{\der_1 q(X^{id},Y^{id},Z^{id})}\ {(\mathtt{qR})}
$$
The $\mathrm{PC}$ formula  $q(X^{id},Y^{id},Z^{id})^\bullet=(X\wedge Y)\vee (\neg X \wedge Z)$
may be proven as follows, under the hypothesis $X\der Y$ and $\neg X\der Z$ (notice that the latter hypothesis is obtained by  $X^{id}\der_2 Z^{(12)}$, by  applying
\ref{fact:sym}):

{\small 
$$ 
  \infer {
    \infer {
      \infer { \infer {\infer{\ }{X\der X}\ {(\mathtt{Id})} \ \quad X\der Y} {X\der X\wedge Y}\ {(\mathtt{\wedge R})}     }{X \der (X\wedge Y)\vee (\neg X \wedge Z)}\ {(\mathtt{WeakR,\vee R})}}
           { \der \neg X, (X\wedge Y)\vee (\neg X \wedge Z)   }\ {(\mathtt{\neg R})}
    \quad
  \infer{  \infer {\infer{\ }{\neg X\der \neg X} \ {(\mathtt{Id})}  \ \quad \neg X\der Z} {\neg X\der \neg X \wedge Z}\ {(\mathtt{\wedge R})}   }{ \neg X\der (X\wedge Y)\vee (\neg X \wedge Z)  } \ {(\mathtt{WeakR,\vee R})}
  } {\der (X\wedge Y)\vee (\neg X \wedge Z) }\ {(\mathtt{Cut})}
$$
}

\item

  The rule ${(\mathtt{qL})}$:

  $$ \infer
{X^{id},Y^{id}\der_1 \quad    X^{id}, Z^{(12)}\der_2 }
{ q(X^{id},Y^{id},Z^{id})\der_1}\ {(\mathtt{qL})}
$$
is translated as follows (notice the use of  \ref{fact:sym} in the translation of the rightmost premise   $X^{id}, Z^{(12)}\der_2$) :

  $$ \infer
  {
    \infer {X, Y \der} {X\wedge Y\der    }\ {(\mathtt{\wedge L})}
  \quad
  \infer {\neg X, Z \der} {\neg X \wedge Z   \der    }\ {(\mathtt{\wedge L})}
}
{  (X\wedge Y)\vee (\neg X \wedge Z)\der}\ {(\mathtt{\vee L})}
$$

\end{itemize}
  
\end{example}


\section{A closer look at the syntax of  $n\mathrm{LK}$}\label{sec:closer}
In this section, ended by two examples of proofs in 3LK,  we study some syntactic properties of $n\mathrm{LK}$, and in particular we introduce two derivable rules that will be used in the semantic completeness result of Section \ref{anubi}.  


We show first that the property expressed in the axiom $\mathtt{Id}$ for decorated propositional variables
may be proved for all the formulas. This is proved in Corollary \ref{cor:id}.


\begin{lemma}\label{lem:neg0}
  Let $\sigma,\rho\in S_n$ and $i$ be such that $\rho^{-1}(i)\neq\sigma^{-1}(i)$. Then
     $\prov {F^\sigma,F^\rho} i {}$, for every  formula $F$.
\end{lemma}
\begin{proof}
  By induction on $F$.

(i)  If $F$ is a decorated variable $X^\pi$, let us choose $k=\rho(\sigma^{-1}(i))$.
  From the hypothesis $\rho^{-1}(i)\neq\sigma^{-1}(i)$ it follows immediately that $k\neq i$. Consider

  $$
  \infer
      {
        \infer
            {(\sigma\circ\pi)^{-1}(i) =  ((ik)\circ\rho\circ \pi)^{-1}(i)}
            {X^{\sigma\circ \pi}\der_i X^{(ik)\circ\rho\circ \pi}\ \ i\not=k}
            {\mathtt{Id}}
      }
      {X^{\sigma\circ \pi}, X^{(ik)\circ(ik)\circ\rho\circ \pi}\der_i}
      {\mathtt{NegL}}
  $$

  The axiom  $\mathtt{Id}$ can be applied since 
  $  ((ik)\circ\rho\circ \pi)^{-1}(i)=\pi^{-1}(\rho^{-1}((ik)^{-1}(i)))=
  \pi^{-1}(\rho^{-1}(k))= 
  \pi^{-1}(\rho^{-1}(\rho(\sigma^{-1}(i))))=\pi^{-1}(\sigma^{-1}(i))=(\sigma\circ\pi)^{-1}(i)$.
    The conclusion of this proof  is $F^\sigma,F^\rho\der_i$, as expected.

  (ii)  If $F=\e_j$ then either $\sigma(j)\neq i$ or $\rho(j)\neq i$. Let us suppose
  without loss of generality that $\sigma(j)\neq i$. Consider

  $$
  \infer
      {
        \infer
            {
              \infer
                  {\ }
                  {\der_i\e_{\sigma(j)}^{(i,\sigma(j))} \quad  i\neq \sigma(j)}
                  {\mathtt{Const}}
            }
            {\e_{\sigma(j)}\der_i}
            {\mathtt{NegL}}
      }
      {\e_{\sigma(j)},\e_{\rho(j)}\der_i}
      {\mathtt{WeakL}}
 $$
The  conclusion of this proof  is $F^\sigma,F^\rho\der_i$, as expected.

    (iii)  If $F=q(G,H_1,\ldots,H_n))$ then proceeding as follows:

  $$
  \infer {
        \infer
            {\left\{G^{(j,l)},H_j^{(j,l)\circ(i,j)\circ\rho},G, H_l^{(j,l)\circ(i,j)\circ\sigma}\der_l\right\}_{l\in\hat n}
            }
            {\left\{G,H_j^{(ij)\circ\rho},q(G,H_1^\sigma,\ldots,H_n^\sigma))^{(ij)}\vdash_j\right\}_{j\in\hat n}}
            {(\mathtt{qL})}}
            {q(G,H_1^\rho,\ldots,H_n^\rho)),q(G,H_1^\sigma,\ldots,H_n^\sigma))\vdash_i}
      {(\mathtt{qL})}
  $$

      we obtain $n^2$ branches, for $1\leq j,l\leq n$, whose leaves are
      of the form $$G^{(j,l)},H_j^{(j,l)\circ(i,j)\circ\rho},G, H_l^{(j,l)\circ(i,j)\circ\sigma}\der_l$$
      If $j=l$ then
 $H_j^{(j,l)\circ(i,j)\circ\rho},H_l^{(j,l)\circ(i,j)\circ\sigma}\der_l$ is provable by the inductive hypothesis, and hence the sequent above may be proved by using the rule
 ($\mathtt{WeakL}$), whereas if $j\neq l$ then $G^{(j,l)},G \der_l$,  is provable by the inductive hypothesis, and again the sequent above may be proved by weakening.
 This provides a proof of $F^\sigma,F^\rho\der_i$, as expected.

\end{proof}

Lemma \ref{lem:neg0} allows us to generalise the axiom $\mathtt{Id}$ to all the formulas.

\begin{corollary}\label{cor:id}
  For all formulas $F$, $1\leq i\leq n$ and permutations $\sigma,\rho$, if 
  $\sigma^{-1}(i)=\rho^{-1}(i)$ then $\prov {F^\sigma} i  {F^\rho}$.

\end{corollary}

\begin{proof}

$$\infer
{\Big\{F^{(ij)\circ\sigma},F^\rho\der_j\Big\}_{i\neq j}}
{F^\sigma \der_i F^\rho } {\mathtt{NegR}}
$$

Each of the $n-1$ premises 
is provable by Lemma \ref{lem:neg0} since $((ij)\circ\sigma)^{-1}(j)=\sigma^{-1}(i)=\rho^{-1}(i)\neq \rho^{-1}(j)$, for all $j\neq i$.
\end{proof}

Notice that Corollary \ref{cor:id} shows in particular that  $\prov F i F$
holds for every  formula $F$ and dimension $i$.
The next lemma introduces two derivable rules that we use in the rest of the paper. 

\begin{lemma}
  The rules 
   $$\infer
{\Gamma^{(ij)}\der_i F,\Delta^{(ij)}\ \ i\not=k}
{\Gamma, F^{(jk)}\der_j\Delta}\ {\mathtt{Neg1}}
\quad \quad 
 \infer
{\Gamma^{(ij)}\der_i F,\Delta^{(ij)}\ \ j\not=k}
{\Gamma, F^{(ik)}\der_j\Delta}\ {\mathtt{Neg2}}
$$
are derivable.
\end{lemma}
\begin{proof}
  Concerning $\mathtt{Neg1}$  we have: if $j=i$, then 
  $\mathtt{Neg1}$ is an instance of $\mathtt{NegL}$. Otherwise:
  $$
  \infer
      {
        \infer
        {
        \infer
            {
              \infer
                  { \Gamma^{(ij)} \der_i F, \Delta^{(ij)}\quad i\neq k}
                  {\Gamma^{(ij)},F^{(ik)}\der_i \Delta^{(ij)}}
                  {\mathtt{NegL}}
            }
            {\Gamma,F^{(ij)\circ(ik)}\vdash_j\Delta}
            {\mathtt{Sym}}
        }
        {\Gamma,F^{(ij)\circ(ik)},F^{(jk)}\vdash_j\Delta}
         {\mathtt{WeakL}}
              \quad
        \infer
            {\vdots}
            {\Gamma,F^{(jk)}\vdash_j F^{(ij)\circ(ik)},\Delta}
            {\ }
      }
      {\Gamma,F^{(jk)}\der_j\Delta}
      {\mathtt{Cut}}
    $$
where the rightmost premise of the cut rule is given by Corollary \ref{cor:id}, plus
      weakenings, since $(jk)^{-1}(j)=((ij)\circ(ik))^{-1}(j)=k$.

  Concerning $\mathtt{Neg2}$ we have: if $j=i$, then 
  $\mathtt{Neg2}$ is an instance of $\mathtt{NegL}$. Otherwise:
  $$
  \infer
      {
        \infer
        {
        \infer
            {
              \infer
                  { \Gamma^{(ij)} \der_i F, \Delta^{(ij)}\quad i\neq j}
                  {\Gamma^{(ij)}, F^{(ij)} \der_i  \Delta^{(ij)}}
                  {\mathtt{NegL}}
            }
            {\Gamma,F\vdash_j\Delta}
            {\mathtt{Sym}}
        }
        {\Gamma,F,F^{(ik)}\vdash_j\Delta}
         {\mathtt{WeakL}}
        \quad
        \infer
            {\vdots}
            {\Gamma,F^{(ik)}\vdash_j F,\Delta}
            {\ }
      }
      {\Gamma,F^{(ik)}\der_j\Delta}
      {\mathtt{Cut}}
    $$
  where the rightmost premise of the cut rule is given by Corollary \ref{cor:id}, plus
      weakenings, since $(ik)^{-1}(j)=id^{-1}(j)=j$, and $F=F^{id}$.
\end{proof}


An expected property of $n\mathrm{LK}$ is stated in the following lemma.


\begin{lemma}\label{Lemma:identity}
  For all $i\neq j\in \hat n$, $\prov {\e_i} j {}$.
  \end{lemma}
\begin{proof}
$$
\infer
{\infer {\ }{ \der_i\e_i}\ (\mathtt{Const}) \quad\quad i\neq j }
{ \e_i\der_j}\ {(\mathtt{Neg1}) }
$$
\end{proof}

We conclude this section with two examples of derivations in 3LK.

\begin{example}$(\mathrm{The\ exclusion\ of\ the\ } (n+1)\mathrm{th})$
For each formula $F\in\cal F_n$ and dimension $i\in \{1,\ldots,n\}$, the sequent $\vdash_i F^{(1i)},F^{(2i)},\ldots,F^{(ni)}$ is provable. 
Let us see the case $n=3, i=1$ 
$$\infer
{F\der_2 {F^{(12)}}^{(12)}\quad \quad  F\der_3 {F^{(13)}}^{(13)}  }
{\der_1 F, F^{(12)},F^{(13)}} {\mathtt{NegR}}
$$ 
Both premises are provable by Corollary \ref{cor:id}, since
${F^{(12)}}^{(12)}={F^{(13)}}^{(13)}=F$.
\end{example}

\begin{example}
  In $2\mathrm{LK}$ $q(x,y,z)$ reads ``if $x$ then $y$ else $z$''. In general,
  $q(x,y_1,\ldots,y_n)$ reads ``if $x=\e_1$ then $y_1$ else
if $x=\e_2$ then $y_2$ else \ldots else if $x=\e_n$ then $y_n$''.
The following $3LK$ derivation of the sequent
$\der_1 q(\e_1,q(\e_2,F,\e_1,G),H,K)$
illustrates how the first argument
of the $q$ connective acts as a projector. This sequent is provable  since
$q(\e_1,q(\e_2,F,\e_1,G),H,K)$ is equivalent to 
$q(\e_2,F,\e_1,G)$,  which in turn is equivalent to $\e_1$, which is
provable in dimension $1$.

\Small{
$$\infer
{\infer{\infer{\vdots}{\e_1,\e_2\der_1 F} \quad\quad
    \infer{\infer {\ }{\der_2 \e_2}{\mathtt{Const}}}{\e_1^{(12)},\e_2 \der_2 \e_1^{(12)}}{\mathtt{WeakL^+}} \quad\quad \infer{\vdots}{\e_1^{(13)},\e_2\der_3 G^{(1,3)}}   } {e_1\der_1 q(\e_2,F,\e_1,G)}{\mathtt{qR}}\quad\quad  \infer{\vdots}{e_1\der_2 H^{(12)}} \quad \quad \infer{\vdots}{e_1\der_3 K^{(13)}}   }
{\der_1 q(\e_1,q(\e_2,F,\e_1,G),H,K)} {\mathtt{qR}}
$$ 
}
The pending premises are all provable by  Lemma \ref{Lemma:identity}  and
weakenings.

\end{example}

\section{Semantics of  $n\mathrm{LK}$}\label{sec:sem}

{\em Environments}, ranged over by $v,u,v_1,\ldots$ are   functions from the set $V$ of propositional variables to
the set $\hat n$ of dimensions, that should be considered here as generalised truth values.

\begin{definition}
Given an environment $v$, $n\mathrm{LK}$ formulas are interpreted into $\hat n$ as follows:
\begin{itemize}

\item $\sem {X^\pi} v= \pi(v(X))$;
\item $\sem {e_i} v= i$, for all $i\in\hat n$;
\item $\sem {q(F,G_1,\ldots,G_n)} v =  \sem {G_i} v$ if ${\sem F v }=i$.

\end{itemize}
\end{definition}

In this setting the notions of tautology, satisfiability etc. are relative to dimensions. For instance,
$F$ is an {\em $i$-tautology} if $\sem F v =i$,  for all $v$.

The following generalisation of the notion of logical consequence is used in order to prove the soundness of the 
calculus.
\begin{definition}
Given two sequences of formulas $\Gamma,\Delta$, and a dimension $i\in\hat n$ we say that $\Delta$ is an {\em $i$-logical consequence}
of $\Gamma$, written $\Gamma \models_i \Delta$, if for all environments $v$, if $\sem G v =i$ for all $G$ in $\Gamma$,
then there exists $F$ in $\Delta$ such that $\sem F v=i$.
\end{definition}

If $\Gamma \models_i \Delta$, then  we say that the sequent $\Gamma \vdash_i \Delta$ is \emph{valid}.

\begin{lemma}\label{Lemma:sym}
For all $F\in \mathcal F_n$, environment $v$, dimensions $i$ and $j$, we have $\sem F v=i$ if and only if  $\sem {F^{(ij)}} v = j$.
\end{lemma}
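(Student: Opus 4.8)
The plan is to prove Lemma \ref{Lemma:sym} by a straightforward structural induction on the formula $F$, using the definition of $F^{(ij)}$ (the case $\rho = (ij)$ of the general permutation action) together with the definition of $\sem{\cdot}{v}$. The statement is genuinely an ``if and only if'', but since $(ij)$ is an involution and $(ij)(i) = j$, $(ij)(j) = i$, the two directions are symmetric: applying the forward direction with the roles of $i$ and $j$ swapped, to the formula $F^{(ij)}$, and then invoking Lemma \ref{Lemma:involution} ($(F^{(ij)})^{(ij)} = F$), yields the converse. So it suffices to prove: if $\sem{F}{v} = i$ then $\sem{F^{(ij)}}{v} = j$ — and in fact it is cleaner to prove the sharper claim that $\sem{F^{(ij)}}{v} = (ij)(\sem{F}{v})$ for \emph{every} value of $\sem{F}{v}$, not just when it equals $i$; this stronger statement is what makes the induction go through in the $q$-case.

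First I would handle the base cases. For a decorated variable $F = X^\pi$, we have $F^{(ij)} = X^{(ij)\circ\pi}$, so $\sem{F^{(ij)}}{v} = ((ij)\circ\pi)(v(X)) = (ij)(\pi(v(X))) = (ij)(\sem{X^\pi}{v})$. For a constant $F = \e_k$, we have $F^{(ij)} = \e_{(ij)(k)}$, so $\sem{F^{(ij)}}{v} = (ij)(k) = (ij)(\sem{\e_k}{v})$. Both base cases give the sharpened equality directly.

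For the inductive step, let $F = q(H, G_1, \ldots, G_n)$, so that $F^{(ij)} = q(H, G_1^{(ij)}, \ldots, G_n^{(ij)})$ — note that the ``head'' $H$ is left untouched by the permutation action. Let $m = \sem{H}{v}$, so that $\sem{F}{v} = \sem{G_m}{v}$. Then $\sem{F^{(ij)}}{v} = \sem{G^{(ij)}_m}{v}$ (since $\sem{H}{v} = m$ still selects the $m$-th component), and by the inductive hypothesis applied to $G_m$ this equals $(ij)(\sem{G_m}{v}) = (ij)(\sem{F}{v})$, as required. Specialising to the case $\sem{F}{v} = i$ gives $\sem{F^{(ij)}}{v} = (ij)(i) = j$, which is the forward direction of the lemma; the backward direction then follows by the symmetry argument above.

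I do not expect any real obstacle here: the only subtlety worth flagging is the decision to prove the uniform statement $\sem{F^{(ij)}}{v} = (ij)(\sem{F}{v})$ rather than the bare biconditional, since the $q$-case of the induction needs the hypothesis available for an arbitrary selected component, not merely for components whose value happens to be $i$. Everything else is bookkeeping with the two definitions.
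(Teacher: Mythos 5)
Your proof is correct and takes essentially the same route as the paper's: a structural induction on $F$ unwinding the definitions of $F^{(ij)}$ and $\sem{\cdot}{v}$, with the $q$-case resting on the fact that the head of a compound formula is left untouched by the permutation action. Your strengthening to the uniform statement $\sem{F^{(ij)}}{v}=(ij)(\sem{F}{v})$ is a harmless polish rather than a necessity --- the bare biconditional, quantified over all $i,j$ in the inductive hypothesis, already suffices in the $q$-case because the selected component $G_m$ has value exactly $\sem{F}{v}$ --- but it does make the bookkeeping cleaner.
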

\begin{proof}
The proof is an easy induction on $F$. We describe case $F=q(G,H_1,\dots,H_n)$:
$ \sem {q(G,H_1,\dots,H_n)} v= \sem {H_{\sem G v}} v   = i$ iff by induction hypothesis $j= \sem {H^{(ij)}_{\sem G v}} v =  \sem {q(G,H_1^{(ij)},\dots,H_n^{(ij)})} v = \sem {q(G,H_1,\dots,H_n)^{(ij)}} v$.
\end{proof}

\begin{proposition}[Soundness]\label{sound}
For all $i\in\hat n$, $\Gamma$ and $\Delta$, if $\prov \Gamma i \Delta$ then $\Gamma\models_i\Delta$.
\end{proposition}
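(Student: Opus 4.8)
The statement is a soundness theorem, so the natural strategy is induction on the height of the derivation of $\prov \Gamma i \Delta$, showing that each rule of Figure~\ref{nPC2} preserves $i$-logical consequence. First I would set up the induction: fix an environment $v$ with $\sem G v = i$ for every $G \in \Gamma$, and produce $F \in \Delta$ with $\sem F v = i$. The base cases are $(\mathtt{Const})$, where $\sem{\e_i}v = i$ directly, and $(\mathtt{Id})$, where the side condition $\pi^{-1}(i)=\rho^{-1}(i)$ is exactly what is needed: if $\sem{X^\pi}v = \pi(v(X)) = i$ then $v(X) = \pi^{-1}(i) = \rho^{-1}(i)$, so $\sem{X^\rho}v = \rho(v(X)) = i$. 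The structural rules $(\mathtt{Cut})$, $(\mathtt{WeakL})$, $(\mathtt{WeakR})$, $(\mathtt{ConL})$, $(\mathtt{ConR})$ are routine set-theoretic manipulations of the antecedent/succedent condition.

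The core of the argument is the treatment of the rules that shuffle dimensions — $(\mathtt{Sym})$, $(\mathtt{Neg1})$, $(\mathtt{Neg2})$, $(\mathtt{Neg3})$ — and the logical rules $(\mathtt{qL})$, $(\mathtt{qR})$. For all of these, Lemma~\ref{Lemma:sym} is the workhorse: it converts a ``$\sem{F^{(ij)}}v = j$'' statement into a ``$\sem F v = i$'' statement and back. For $(\mathtt{Sym})$, given the premise $\Gamma \models_j \Delta$ and an environment with $\sem{H^{(ij)}}v = i$ for all $H^{(ij)} \in \Gamma^{(ij)}$, Lemma~\ref{Lemma:sym} gives $\sem H v = j$ for all $H \in \Gamma$, so some $F \in \Delta$ has $\sem F v = j$, hence $\sem{F^{(ij)}}v = i$. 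The $(\mathtt{Neg})$ rules are analogous but additionally use that, since $i \neq k$ (resp.\ $j \neq k$), a value cannot simultaneously equal $i$ and $k$; concretely, in $(\mathtt{Neg1})$ one argues that if $\sem{F^{(jk)}}v = j$ were false (i.e.\ the premise's succedent formula $F$ has $\sem F v \neq i$ or some $\Delta^{(ij)}$ formula witnesses it) one disentangles the cases via Lemma~\ref{Lemma:sym} applied with the relevant transposition. I would write out $(\mathtt{Neg1})$ in full and remark the others are symmetric.

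For $(\mathtt{qL})$ and $(\mathtt{qR})$ the key observation is the semantic clause $\sem{q(F,G_1,\dots,G_n)}v = \sem{G_i}v$ where $i = \sem F v$; combined with the fact that $\sem F v$ is \emph{some} value $j \in \hat n$, one of the $n$ premises — precisely the one indexed by $j := \sem F v$ — becomes applicable. For $(\mathtt{qR})$, say we have $\sem G v = i$ for all $G \in \Gamma$ and want a formula in $q(F,G_1,\dots,G_n),\Delta$ with value $i$; let $j = \sem F v$, apply the $j$-th premise $\Gamma^{(ji)}, F \der_j G_j^{(ji)},\Delta^{(ji)}$ after using Lemma~\ref{Lemma:sym} (and that $\sem F v = j$) to check its antecedent holds; the conclusion then yields either $\sem{G_j^{(ji)}}v = j$, whence $\sem{G_j}v = i = \sem{q(F,\dots)}v$, or some $\Delta^{(ji)}$ formula with value $j$, whence the corresponding $\Delta$ formula has value $i$. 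The treatment of $(\mathtt{qL})$ is dual, noting the extra antecedent formula $G_j^{(ji)}$ is satisfied precisely because $\sem{q(F,G_1,\dots,G_n)}v = \sem{G_j}v = i$ (since $\sem F v = j$) gives $\sem{G_j^{(ji)}}v = j$.

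The main obstacle is bookkeeping: the transpositions $(ij)$, $(ji)$, $(jk)$ decorating the contexts in the $(\mathtt{Neg})$ and $q$-rules must be tracked carefully, and one must repeatedly invoke Lemma~\ref{Lemma:sym} with exactly the right transposition to translate between the ``view from dimension $j$'' in the premise and the ``view from dimension $i$'' in the conclusion. There is no conceptual difficulty — each rule reduces, via Lemma~\ref{Lemma:sym} and the $q$-clause, to a tautology about the values $\sem\cdot v \in \hat n$ — but the proof is long if done exhaustively, so I would present $(\mathtt{Const})$, $(\mathtt{Id})$, $(\mathtt{Sym})$, $(\mathtt{Neg1})$, $(\mathtt{qR})$ in detail and explicitly delegate the remaining rules as ``entirely analogous''.
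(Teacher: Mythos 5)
Your proposal is correct and follows essentially the same route as the paper: induction on the derivation, with Lemma~\ref{Lemma:sym} handling the permutation-shuffling rules and, for $(\mathtt{qL})$/$(\mathtt{qR})$, the selection of the premise indexed by $j=\sem F v$ so that the extra formula $G_j^{(ji)}$ (resp.\ $F$) in that premise is automatically satisfied. The only blemish is the garbled sentence on $(\mathtt{Neg1})$ — the clean statement is that the antecedent hypothesis $\sem{F^{(jk)}}v=j$ forces $\sem F v=k\neq i$, so the disjunct $F$ in the premise's succedent is ruled out and a witness must come from $\Delta^{(ij)}$ — but the underlying idea you cite (the side condition $i\neq k$) is exactly the right one.
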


\begin{proof}
By induction on the proof of $\Gamma \der_i \Delta$.
If the last rule is an instance of $\mathtt{Const}$, 
$\mathtt{Id}$, $\mathtt{WeakL}$, $\mathtt{ WeakR}$,   $\mathtt{ConL}$, $\mathtt{ConR}$
or  $\mathtt{Cut}$, the conclusion is immediate. 
If it is $\mathtt{Sym}$, $\mathtt{NegL}$ or $\mathtt{NegR}$,
the conclusion follows from  Lemma \ref{Lemma:sym}.

Let us consider the case $\mathtt{qL}$, using the same notation as in 
Figure \ref{nLK2}: given $v$ such that 
\begin{itemize}
\item for all $R$ in $\Gamma$, $\sem R v=i$.
\item $\sem {q(F,G_1,\ldots,G_n)} v = \sem {G_{\sem F v}} v =i$,
\end{itemize}
we have to show that there exists $H$ in $\Delta$ such that $\sem H v=i$.
Let $j={\sem F v}$, and consider the $j$-th premise: $\Gamma^{(ij)},F,G_j^{(ij)}\der_j\Delta^{(ij)}$.
By Lemma \ref{Lemma:sym}, for all $R\in\Gamma$, we have $\sem {R^{(ij)}} v=j$, and $\sem {G_j^{(ij)}} v=\sem {(G_{\sem F v})^{(ij)}} v= j$.
Moreover $\sem F v=j$, and by ind. hyp.  $\Gamma^{(ij)},F,G_j^{(ij)}\models_j\Delta^{(ij)}$.
Then there exists $H$ in $\Delta$ such that $\sem {H^{(ij)}} v=j$, 
and we conclude $\sem H v=i$ by Lemma \ref{Lemma:sym}.

The case  $\mathtt{qR}$ is similar, and omitted.
\end{proof}

\section{Completeness}\label{anubi}
In this section  we adapt to $n\mathrm{LK}$ the completeness proof for the classical propositional calculus based on the use of the Lindenbaum algebra of formulas.

\subsection{The Lindenbaum algebra $L_n$}\label{sec:lind}


\begin{definition}
Two formulas $F$ and $G$ of $n\mathrm{LK}$ are {\em equivalent}, written
$F\sim G$, if, for all $i\in\hat n$, both $F\vdash_iG$ and 
$G\vdash_iF$ are provable.
\end{definition}

\begin{lemma}\label{lem:equiv}
The relation $\sim$ is an  equivalence relation on the set 
of formulas of $n\mathrm{LK}$.
\end{lemma}

\begin{proof}
Symmetry is immediate. 
Reflexivity is shown in Corollary \ref{cor:id}.
Transitivity follows  from the cut rule.
\end{proof}

\begin{lemma}\label{lem:congr}
If $F_i\sim G_i$ for $0\leq i\leq n$, then 
$q(F_0,\ldots,F_n)\sim q(G_0,\ldots,G_n)$.
\end{lemma}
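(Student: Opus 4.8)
The plan is to establish both provabilities $\prov{q(F_0,\ldots,F_n)}{i}{q(G_0,\ldots,G_n)}$ and its converse for every $i\in\hat n$; since the hypothesis $F_l\sim G_l$ ($0\le l\le n$) is symmetric in the $F_l$'s and $G_l$'s, it is enough to prove the first one, the second following by running the same argument with the roles of the $F$'s and $G$'s interchanged. So I would fix $i\in\hat n$ and aim at $\prov{q(F_0,\ldots,F_n)}{i}{q(G_0,\ldots,G_n)}$.

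First I would peel off the succedent with $(\mathtt{qR})$, reducing the goal to proving, for each $k\in\hat n$, the sequent $q(F_0,\ldots,F_n)^{(ki)},\,G_0\vdash_k G_k^{(ki)}$; since a permutation decoration distributes over $q$ while leaving the test argument untouched, this is $q(F_0,F_1^{(ki)},\ldots,F_n^{(ki)}),\,G_0\vdash_k G_k^{(ki)}$. Then I would peel off this compound antecedent with $(\mathtt{qL})$ and, keeping track of the composed exchanges by Lemma~\ref{Lemma:perm}, reduce to proving, for each $j\in\hat n$,
\[
G_0^{(jk)},\ F_0,\ F_j^{(jk)\circ(ki)}\ \vdash_j\ G_k^{(jk)\circ(ki)}.
\]
Here I would split on whether $j=k$.

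If $j=k$, the composite exchange is the identity and the sequent becomes $G_0,F_0,F_k^{(ki)}\vdash_k G_k^{(ki)}$; I would invoke the hypothesis $F_k\sim G_k$ in the form $\prov{F_k}{i}{G_k}$, apply $(\mathtt{Sym})$ with the exchange $(ki)$ to get $\prov{F_k^{(ki)}}{k}{G_k^{(ki)}}$, and conclude with two applications of $(\mathtt{WeakL})$. If $j\neq k$, I would argue that the antecedent is already refutable at dimension $j$: from $F_0\sim G_0$ we have $\prov{F_0}{j}{G_0}$, and a (degenerate, since $j\neq k$) instance of $(\mathtt{Neg1})$ in the style of the proof of Lemma~\ref{Lemma:identity}(i) turns this into $\prov{G_0^{(jk)},F_0}{j}{}$ — alternatively one reaches $\prov{G_0^{(jk)},F_0}{j}{}$ by a $(\mathtt{Cut})$ against the instance $\prov{F_0,F_0^{(jk)}}{j}{}$ of Lemma~\ref{Lemma:identity}(iii); weakening in the remaining formulas then finishes the case.

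I expect the only real friction to be the permutation bookkeeping: checking that the exchange decorations produced by the two nested eliminations compose as claimed, and that the degenerate instances of $(\mathtt{Neg1})$ (and of Lemma~\ref{Lemma:identity}(iii)) respect their side conditions — all routine given Lemma~\ref{Lemma:perm}. Conceptually the content is light: the $j\neq k$ premises collapse because $F_0\sim G_0$ forces $q(G_0,\ldots,G_n)$ to select its $k$-th branch exactly when $F_0$ (equivalently $G_0$) evaluates to $k$, while the $j=k$ premise is just the hypothesis on the branches transported across $(\mathtt{Sym})$.
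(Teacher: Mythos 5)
Your proof is correct and is essentially the paper's argument with the two eliminations performed in dual order: the paper applies $(\mathtt{qL})$ to the antecedent first and then $(\mathtt{qR})$ to each resulting succedent, while you apply $(\mathtt{qR})$ first and then $(\mathtt{qL})$, but both decompositions reduce to the same leaves --- diagonal premises discharged by the hypothesis $F_k\sim G_k$ transported through $(\mathtt{Sym})$ plus weakening, and off-diagonal premises killed by $F_0\vdash_j G_0$ via a degenerate instance of $(\mathtt{Neg1})$. One small wording slip: in the $j=k$ case only the outer exchange $(jk)$ collapses to the identity, not the whole composite $(jk)\circ(ki)$, but the sequent you then write, with $(ki)$ still decorating $F_k$ and $G_k$, is the correct one.
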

\begin{proof}
We can apply $\mathtt{qL}$:

$$
\infer
{\Big\{ F_0,F_j^{(ij)}  \der_j q(G_0,\ldots,G_n)^{(ij)}  \Big\}_{j\in\hat n}}
{q(F_0,\ldots,F_n)\der_i  q(G_0,\ldots,G_n) }\ {(\mathtt{qL})}
$$

Therefore, we have to prove the sequents 
${F_0,F_j^{(ij)}} \vdash_j q(G_0,\ldots,G_n)^{(ij)} $, for $1\leq j\leq n$. Let us consider 
first the case $j=i$: 

$$
\infer
{\ldots\ldots \ \ \infer{F_i\der_i G_i}{F_0,F_i,G_0\der_i G_i}{(\mathtt{Weak^+})} \ \ \ldots\ldots     
\infer{ \infer {F_0\der_i G_0  } 
{F_0^{(ik)},G_0\der_k}{(\mathtt{Neg1})}} 
{F_0^{(ik)},F_i^{(ik)},G_0\der_k G_k ^{(ik)}}  {(\mathtt{Weak^+})} }
{F_0,F_i\der_i q(G_0,\ldots,G_n)} {(\mathtt{qR})}
$$
In the above proof  the $i$-th and $k$-th premisses are developed, for some  $k\not=i$.
All the other premisses are similar to the $k$-th. The conclusion follows.

\noindent Let us consider now the case  $j\not=i$: 
$$
\infer
{\ldots \ \ \infer{\infer{F_j\der_i G_j}{F_j^{(ij)}\der_j G_j^{(ij)}}   
{(\mathtt{Sym})}
}
{F_0,F_j^{(ij)},G_0\der_j G_j^{(ij)}}{(\mathtt{Weak^+})} \ \ \ldots     
\infer{ \infer {F_0\der_j G_0} 
{F_0^{(jk)},G_0\der_k}{(\mathtt{Neg1})}} 
{F_0^{(jk)},{F_j^{(ij)}}^{(jk)},G_0\der_k {G_k ^{(ij)}}^{(jk)}}  {(\mathtt{Weak^+})} }
{F_0,F^{(ij)}_j\der_j q(G_0,\ldots,G_n)^{(ij)}} {(\mathtt{qR})}
$$
In the above proof  the $j$-th and $k$-th premisses are developed, for some  $k\not=j$.
All the other premisses are similar to the $k$-th. The conclusion follows.
\end{proof}

By Lemmas \ref{lem:equiv} and \ref{lem:congr} the relation $\sim$ is a congruence on $\mathcal F_n$.

The next lemma explains the meaning of a decorated formula.

\begin{lemma}\label{Lemma:permutations}
For each $H\in{\mathcal F}_n$ and  $\pi\in S_n$,  $H^\pi\sim q(H,e_{\pi(1)},\ldots,e_{\pi(n)})$.
\end{lemma}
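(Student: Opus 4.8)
The plan is to prove $H^\pi\sim q(H,\e_{\pi(1)},\ldots,\e_{\pi(n)})$ by induction on the structure of the formula $H$, proving for each $i\in\hat n$ that both sequents $H^\pi\der_i q(H,\e_{\pi(1)},\ldots,\e_{\pi(n)})$ and $q(H,\e_{\pi(1)},\ldots,\e_{\pi(n)})\der_i H^\pi$ are provable. Throughout I would write $\Theta_\pi(H)$ for $q(H,\e_{\pi(1)},\ldots,\e_{\pi(n)})$ to keep the displays readable.

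First I would handle the direction $\Theta_\pi(H)\der_i H^\pi$ by applying $(\mathtt{qL})$ to the left-hand compound formula. This reduces the goal, for each $j\in\hat n$, to proving $\e_{\pi(j)}^{(ij)},H\der_j (H^\pi)^{(ij)}$ (after noting $(\e_{\pi(j)})^{(ij)}$ is what appears in the $j$-th premise, following the rule's schema with $G_j=\e_{\pi(j)}$). One then splits on whether $\pi(j)=i$ or not. If $\pi(j)=i$, the constant $\e_{\pi(j)}^{(ij)}=\e_j$ simplifies, and the premise is provable from $H\der_j (H^\pi)^{(ij)}$ by $(\mathtt{WeakL})$; here I invoke the inductive hypothesis together with Lemma \ref{Lemma:sym}/\ref{Lemma:involution}-style manipulations and Lemma \ref{Lemma:perm} to see that $H\der_j (H^\pi)^{(ij)}$ reduces to an instance of $\prov{F^\pi}{j}{F^\rho}$ with $\pi^{-1}(j)=\rho^{-1}(j)$, i.e. Lemma \ref{Lemma:menouno} (since when $\pi(j)=i$ one checks the decoration indices line up). If $\pi(j)\neq i$, then $\e_{\pi(j)}^{(ij)}$ is a constant $\e_k$ with $k\neq j$, so by Lemma \ref{Lemma:identity}(i) we have $\prov{\e_k}{j}{}$, and the premise follows by $(\mathtt{WeakR})$. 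The symmetric direction $H^\pi\der_i\Theta_\pi(H)$ is dual: apply $(\mathtt{qR})$ to the right-hand compound formula, reducing to premises $F,(H^\pi)^{\ldots}\der_j \e_{\pi(j)}^{(ij)},\ldots$, and close each branch by the same case analysis on whether $\pi(j)=i$, using $(\mathtt{Const})$ plus weakenings in the matching case and Lemma \ref{Lemma:identity}(i) otherwise.

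For the base cases: if $H=X^\rho$ is a decorated variable, then $H^\pi=X^{\pi\circ\rho}$ by Lemma \ref{Lemma:perm}, and the equivalence $X^{\pi\circ\rho}\sim q(X^\rho,\e_{\pi(1)},\ldots,\e_{\pi(n)})$ is essentially the single-variable instance of the whole statement, which unwinds to the axiom $(\mathtt{Id})$ after applying $(\mathtt{qL})$/$(\mathtt{qR})$ as above and checking $\pi^{-1}(\rho^{-1}(\cdot))$ bookkeeping; alternatively one could first establish $X^\pi\sim q(X,\e_{\pi(1)},\ldots,\e_{\pi(n)})$ directly from $(\mathtt{Id})$ and then bootstrap using Lemma \ref{Lemma:menouno}. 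If $H=\e_k$, then $H^\pi=\e_{\pi(k)}$ while $q(\e_k,\e_{\pi(1)},\ldots,\e_{\pi(n)})$ should behave like $\e_{\pi(k)}$; proving $\e_{\pi(k)}\der_i q(\e_k,\ldots)$ and back is a matter of the $(\mathtt{qL})$/$(\mathtt{qR})$ reductions combined with $(\mathtt{Const})$, Lemma \ref{Lemma:identity}(i), and Lemma \ref{Lemma:identity}(iii) for the branches $j\neq k$.

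The main obstacle I expect is the permutation/exchange bookkeeping in the compound case: after applying $(\mathtt{qL})$ or $(\mathtt{qR})$ one faces formulas of the form $(H^\pi)^{(ij)}$ against $H^{(ij)}$-decorated data in each branch, and one must carefully use Lemma \ref{Lemma:perm} (so $(H^\pi)^{(ij)}=H^{(ij)\circ\pi}$) to recognize the branch premises as instances of the inductive hypothesis or of Lemma \ref{Lemma:menouno}, in particular verifying that the relevant $\pi^{-1}(j)=\rho^{-1}(j)$ condition holds precisely in the branch $j$ with $\pi(j)=i$ (or $k=i$). The rest — weakenings, the constant branches — is routine and patterned exactly on the proof of Lemma \ref{Lemma:identity}(ii).
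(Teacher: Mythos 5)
Your first direction (applying $(\mathtt{qL})$ to $q(H,\e_{\pi(1)},\ldots,\e_{\pi(n)})$) matches the paper's proof: there the constants $\e_{\pi(j)}^{(ij)}$ land on the \emph{left} of the $j$-th premise, so Lemma \ref{Lemma:identity}(i) plus weakening closes the branches with $\pi(j)\neq i$, and Lemma \ref{Lemma:menouno} closes the branch with $\pi(j)=i$. The second direction, however, has a genuine gap. After $(\mathtt{qR})$ the $j$-th premise is $H^{(ij)\circ\pi}, H \der_j \e_{\pi(j)}^{(ij)}$, with the constant on the \emph{right}. When $\pi(j)\neq i$ this constant is some $\e_k$ with $k\neq j$, and a sequent whose succedent consists of a wrong-index constant is not automatically provable: Lemma \ref{Lemma:identity}(i) concerns constants in the antecedent and gives you nothing here. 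The situation is not dual to the $(\mathtt{qL})$ case. What must actually be shown is that the antecedent is $j$-inconsistent, i.e.\ $H^{(ij)\circ\pi}, H\der_j$. The paper does this by first discarding the constant with $(\mathtt{WeakR})$ and then applying $(\mathtt{Neg1})$ (with the degenerate choice $k:=j$) to reduce the goal to $H^{(jr)}\der_r H^{\sigma}$, where $\sigma=(ij)\circ\pi$ and $r=\sigma(j)\neq j$; this last sequent is an instance of Lemma \ref{Lemma:menouno} because $\sigma^{-1}(r)=j=(jr)^{-1}(r)$. Nothing in your sketch supplies this step, and it is the one non-routine point of the whole proof.

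A secondary remark: the induction on $H$ that you announce does no work. Once Lemma \ref{Lemma:menouno} is available (that is where the induction on formulas actually lives), both directions go through uniformly for an arbitrary $H$, so the separate base cases for decorated variables and constants, and the appeal to an ``inductive hypothesis'' in the $\pi(j)=i$ branch, can simply be dropped; the paper's proof is correspondingly non-inductive.
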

\begin{proof}
Given $i$,  let us begin by proving that $\prov {q(H,e_{\pi(1)},\ldots,e_{\pi(n)})} i {H^\pi}$:

$$
\infer
{\{H, \e^{(ij)}_{\pi(j)} \der_j H^{(ij)\circ\pi}\}_{j\in\hat n}}
{q(H,\e_{\pi(1)},\ldots,\e_{\pi(n)}) \der_i H^\pi} {(\mathtt{qL})}
$$
We reason by case analysis on $j$: if $\pi(j)\not=i$, then $\e^{(ij)}_{\pi(j)}=\e_{(ij)\circ\pi(j)}\not =\e_j$. Since
the sequent $\e^{(ij)}_{\pi(j)} \der_j$ is provable by Lemma \ref{Lemma:identity}, then by applying $\mathtt{WeakR}$ and $\mathtt{WeakL}$ we get $\prov {H, \e^{(ij)}_{\pi(j)}} j {H^{(ij)\circ\pi}}$. If $\pi(j)=i$, then we apply Corollary \ref{cor:id} to prove $H\der_j H^{(ij)\circ\pi}$, because  $((ij)\circ\pi)^{-1}(j)=j=id^{-1}(j)$.
We obtain the conclusion $H, \e_j \der_j H^{(ij)\circ\pi}$  by applying $\mathtt{WeakL}$.

Concerning $\prov {H^\pi} i {q(H,\e_{\pi(1)},\ldots,\e_{\pi(n)})}$ we apply $\mathtt{qR}$:

$$
\infer
{\{ H^{(ij)\circ\pi}, H\der_j  \e^{(ij)}_{\pi(j)}\}_{j\in\hat n}}
{H^\pi \der_i q(H,\e_{\pi(1)},\ldots,\e_{\pi(n)})} {(\mathtt{qR})}
$$
If $\pi(j)=i$, then  $\e^{(ij)}_{\pi(j)}=\e_j$ and we are done by $\mathtt{Const}$ and weakening.
If  $\pi(j)\not=i$, then $((ij)\circ\pi)(j)\not=j$. In this case, renaming  $\sigma$ the permutation $(ij)\circ\pi$ and putting $r=\sigma(j)$, we proceed by applying $\mathtt{Neg1}$ with $\Gamma := H$, $F:= H^\sigma$ and $k:=j$: 
$$
\infer
{\infer{H^{(jr)} \der_{r}  H^\sigma}
{   H^{\sigma}, H\der_j    } {(\mathtt{Neg1})}}
{H^{\sigma}, H\der_j  \e^{(ij)}_{\pi(j) }}{(\mathtt{WeakR})}
$$
Since $\sigma^{-1}(r)=j=(jr)^{-1}(r)$, the sequent $H^{(jr)} \der_{r} H^\sigma$ is provable by Corollary \ref{cor:id}.
\end{proof}


\begin{theorem}
The quotient set ${\mathcal F_n}/\!\!\sim$ is the universe  of an $n\mathrm{BA}$,
denoted by $L_n$, and called the Lindenbaum algebra of  $nLK$.
\end{theorem}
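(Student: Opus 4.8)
The plan is to verify that the operations $q$ and the constants $\e_1,\dots,\e_n$ descend to well-defined operations on ${\mathcal F}_n/\!\sim$, and then to check the defining identities B1--B3 of Theorem~\ref{thm:centrale} for the resulting algebra. The well-definedness of $q$ on equivalence classes is exactly the content of the lemma asserting that $F_i\sim G_i$ for $0\leq i\leq n$ implies $q(F_0,\dots,F_n)\sim q(G_0,\dots,G_n)$, which has already been established; the constants are automatically well defined since they are specific formulas. So the first step is merely to record that $L_n=({\mathcal F}_n/\!\sim,\;q,\;[\e_1],\dots,[\e_n])$ is a well-defined algebra of the pure type.

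The heart of the matter is to check that each of B1, B2, B3 holds in $L_n$ for an arbitrary class $[c]\in{\mathcal F}_n/\!\sim$, i.e.\ that every element of $L_n$ is $n$-central; equivalently, that the following three sequents are provable in $n\mathrm{PC}$ for all formulas $c,x,\sigma$-arguments etc.\ and both directions of $\sim$:
\begin{itemize}
\item[B1] $q(c,\e_1,\dots,\e_n)\sim c$;
\item[B2] $q(c,x,x,\dots,x)\sim x$;
\item[B3] $q\big(c,q(c,\mathbf x_1),\dots,q(c,\mathbf x_n)\big)\sim q(c,\mathbf x^1,\dots,\mathbf x^n)$ in the pure case, which for the pure type reduces to the single schema $q\big(c,q(c,\mathbf x_1),\dots,q(c,\mathbf x_n)\big)\sim q(c,x^1_1,\dots,x^n_n)$, namely B4.
\end{itemize}
Each of these is proved by exhibiting explicit derivations, in the same style as Lemma~\ref{Lemma:identity} and Lemma~\ref{Lemma:permutations}: one fixes a dimension $i$, applies $(\mathtt{qL})$ or $(\mathtt{qR})$ to decompose the outermost $q$, obtains one premiss-sequent per dimension $j\in\hat n$, and closes each branch. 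For B1, the $(\mathtt{qL})$-branch for dimension $j$ asks for $c,\e_j^{(ij)}\vdash_j c^{(ij)}$: when $j=i$ this follows from $\prov c i c$ (Lemma~\ref{Lemma:identity}(ii)) and weakening, and when $j\neq i$ the formula $\e_j^{(ij)}=\e_i$ is $j$-antivalid, so $\prov{\e_i}{j}{}$ by Lemma~\ref{Lemma:identity}(i) plus weakening does the job; the converse direction is symmetric. For B2 the branches reduce, via $\prov x i x$ and the $(\mathtt{Neg1})$ trick used throughout to kill an off-diagonal $c^{(jk)}$ against $\prov c k c$, to instances of Lemma~\ref{Lemma:identity}. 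B3/B4 is the most laborious: after applying $(\mathtt{qL})$ to the outer $q$ on the left and $(\mathtt{qR})$ to the outer $q$ on the right and iterating on the inner occurrences of $q(c,-)$, each leaf becomes a sequent of the form $x^j_j,(\text{various }c^{(\,)})\vdash_k (\text{something})$ which is discharged either by $\prov{x^j_j}{k}{x^j_j}$ on the diagonal or by an application of $(\mathtt{Neg1})$ eliminating an off-diagonal copy of $c$ — exactly the same bookkeeping with exchanges $(ij),(jk)$ that appears in the proof of Lemma~\ref{Lemma:permutations}.

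The main obstacle is purely one of combinatorial control in B3/B4: because $(\mathtt{qL})$ and $(\mathtt{qR})$ each fan out into $n$ premisses and the identity B4 contains $n$ nested occurrences of $q$ on one side, a naive unfolding produces a derivation tree whose branches are indexed by tuples in $\hat n^{\,k}$, and one must keep track of which composite permutation $(i_1 j_1)\circ\cdots\circ(i_r j_r)$ decorates the copies of $c$ and of the $x^j_\ell$ appearing at a given leaf, in order to see that off-diagonal leaves close by $(\mathtt{Neg1})$ and diagonal ones by the already-proved $\prov G i G$. Once the pattern is set up as in the earlier lemmas — develop only the $i$-th and one generic $k$-th premiss, remark that all other premisses are analogous, and absorb the structural rules into a $\mathtt{Weak^+}$ meta-rule — the argument is routine; but laying it out cleanly for the $n$-fold nesting is where the real work lies. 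Having verified B1--B3, Theorem~\ref{thm:centrale} gives that every $[c]$ is $n$-central, hence by Definition~\ref{mezzucci} that $L_n$ is an $n\mathrm{BA}$, completing the proof.
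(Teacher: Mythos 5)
Your overall strategy---establish well-definedness of $q$ on $\sim$-classes via the congruence lemma, then verify the equational axioms by explicit $(\mathtt{qL})$/$(\mathtt{qR})$ derivations whose branches are closed with Lemma \ref{Lemma:identity}, $(\mathtt{Neg1})$ and weakening---is exactly the paper's, and your treatments of B1 and B2 match its derivations. But there is a genuine gap in your handling of B3. You assert that for the pure type B3 ``reduces to the single schema'' B4, i.e.\ $q(c,q(c,\mathbf x_1),\dots,q(c,\mathbf x_n))\sim q(c,x^1_1,\dots,x^n_n)$. That is not what B3 says. For the pure signature the only operation of positive arity is $q$ itself, so the instance of B3 that must be verified is the full commutation law
$$q\bigl(c,\,q(\mathbf x_1),\dots,q(\mathbf x_n)\bigr)\;\sim\;q\bigl(q(c,\mathbf x^1),\dots,q(c,\mathbf x^{n+1})\bigr)$$
for an \emph{arbitrary} $n\times(n+1)$ matrix $\mathbf x$ of formulas: the inner $q$'s on the left have arbitrary first arguments, not $c$, and the right-hand side is an outer $q$ whose own first argument is the compound formula $q(c,\mathbf x^1)$. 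The paper explicitly notes that B4 is a \emph{consequence} of B1--B3, so proving B4 does not give you B3, and without B3 you cannot invoke the axiomatisation of $n\mathrm{BA}$s (equivalently Theorem \ref{thm:centrale}) to conclude that every class is $n$-central. The paper's proof of the genuine B3 is precisely its most elaborate derivation: a tree of depth $5$ and branching factor $n$ whose leaves are indexed by tuples $(j,k,l,h,m)$ and decorated by compositions of five exchanges; the tree you describe, which ``iterates on the inner occurrences of $q(c,-)$'', is set up for the wrong identity.

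A second, smaller omission: before $n$-centrality even applies you must check that $L_n$ is an $n\mathrm{CH}$, i.e.\ that $q(\e_k,H_1,\dots,H_n)\sim H_k$ holds in the quotient. This identity is not among B1--B3 and your proposal never addresses it; the paper proves it first (one application of $(\mathtt{qL})$, resp.\ $(\mathtt{qR})$, closing the $k$-th branch with $H_k^{(ik)}\der_k H_k^{(ik)}$ and the remaining branches with $\prov{\e_k}{j}{}$ from Lemma \ref{Lemma:identity}(i)). Both gaps are repairable with the techniques you already deploy, but as written the argument does not reach the conclusion.
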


\begin{proof}
  We have to prove that:
  \begin{enumerate}
  \item[(B0)] $q(\e_k,H_1,\dots,H_n)\sim H_k$:
  
$$
\infer
{ \infer{H_k^{(ik)} \der_k  H_k^{(ik)}}{\e_k, H_k^{(ik)}\der_k  H_k^{(ik)}}{(\mathtt{WeakL})}\qquad 
\Big\{\infer{\e_k\der_j}{\e_k, H_j^{(ij)}\der_j  H_k^{(ij)} }{(\mathtt{Weak}^+)}\Big\}_{j\in\hat n\setminus \{k\}} } 
{q(\e_k,H_1,\dots,H_n) \der_{i} H_k}{(\mathtt{qL})}
$$

 $$
\infer
{ \infer{H_k^{(ik)} \der_k  H_k^{(ik)}}{H_k^{(ik)},\e_k\der_k  H_k^{(ik)}}{(\mathtt{WeakL})}\qquad 
\Big\{\infer{\e_k\der_j}{H_k^{(ij)},\e_k\der_j  H_j^{(ij)} }{(\mathtt{Weak}^+)}\Big\}_{j\in\hat n\setminus \{k\}} } 
{H_k \der_{i} q(\e_k,H_1,\dots,H_n) }{(\mathtt{qR})}
$$

 \item[(B1)] $ q(H,\e_1,\dots,\e_n)\sim H$:
$$
\infer
{ \infer{H \der_i  H}{H,\e_i\der_i  H}{(\mathtt{WeakL})}\qquad 
\Big\{\infer{\e_i\der_j}{H, \e_j^{(ij)} \der_j  H^{(ij)} }{(\mathtt{Weak}^+)}\Big\}_{j\in\hat n\setminus \{i\}} } 
{q(H,\e_1,\dots,\e_n) \der_{i} H}{(\mathtt{qL})}
$$
$$
\infer
{ \infer{ \der_i  \e_i}{H,H\der_i  \e_i}{(\mathtt{WeakL})}\qquad 
\Big\{\infer{\infer{\mathrm{Lemma}\ \ref{lem:neg0}}{ H^{(ij)}, H\der_j}}{H^{(ij)}, H \der_j  \e_j^{(ij)} }{(\mathtt{WeakR})}\Big\}_{j\in\hat n\setminus \{i\}} } 
{H \der_{i} q(H,\e_1,\dots,\e_n)}{(\mathtt{qR})}
$$

  \item[(B2)] $q(H,F,\dots,F)\sim F$:
  
$$
\infer
{\infer{F^{(ij)} \der_{j}  F^{(ij)}}
{   H, F^{(ij)}\der_j  F^{(ij)}  } {(\mathtt{WeakL})}}
{q(H,F,\dots,F) \der_{i} F}{(\mathtt{qL})}\qquad\qquad
\infer
{\infer{F^{(ij)} \der_{j}  F^{(ij)}}
{   F^{(ij)}, H\der_j  F^{(ij)}  } {(\mathtt{WeakL})}}
{F \der_{i} q(H,F,\dots,F)}{(\mathtt{qR})}
$$

   \item[(B3)] {\Small $ q(H,q(F^1_0,\ldots,F^1_{n}),\ldots,q(F^n_0,\ldots,F^n_{n}))\sim q(q(H,F^1_0,\ldots,F^n_0),\ldots,q(H,F^1_{n},\ldots,F^n_{n}))$}:

\noindent  We have to show that, for all $H, F^r_k$ ($1\leq r\leq n$ and $0\leq k\leq n$),
the formulas 
$$H_1=q(H,q(F^1_0,\ldots,F^1_{n}),\ldots,q(F^n_0,\ldots,F^n_{n}))$$
and 
$$H_2=q(q(H,F^1_0,\ldots,F^n_0),\ldots,q(H,F^1_{n},\ldots,F^n_{n}))$$ are equivalent.
For all $i$, the proofs of $H_1\vdash_i H_2$ and of  $H_2\vdash_i H_1$ are trees of branching factor $n$ and depth $5$.
Each leaf of those proof trees  is identified by a sequence of $5$ integers $j,k,l,h,m$ between $1$ and $n$, its branch. We are  going to 
construct one of such branches, and argue that the corresponding leaf is always a provable sequent, by case analysis on the sequence of integers associated to it.
Five different exchanges come into play in this derivation, that we rename for typographical reasons: $\pi:=(ij)$, $\rho:=(jk)$, $\sigma:=(kl)$, $\tau:=(lh)$, $\psi:=(hm)$. 
In the following proof the principal formula of each rule application is depicted in bold.

{\small
$$
\infer{\infer{\infer{\infer{\infer
{ H^{\psi\circ\tau\circ\sigma},{F_0^k}^{\psi\circ\tau},{F_{l}^k}^{\psi\circ\tau\circ\sigma\circ\rho\circ\pi},H^\psi,{F_0^h}^{\psi\circ\tau\circ\sigma\circ\rho},H  \vdash_m {F_{j}^m}^{\psi\circ\tau\circ\sigma\circ\rho\circ\pi}}
{H^{\tau\circ\sigma},{F_0^k}^\tau,{F_{l}^k}^{\tau\circ\sigma\circ\rho\circ\pi},H,{F_0^h}^{\tau\circ\sigma\circ\rho}  \vdash_h  \bold{q(H,{F_{j}^1}^{\tau\circ\sigma\circ\rho\circ\pi},\ldots,{F_{j}^n}^{\tau\circ\sigma\circ\rho\circ\pi})}  }  {(\mathtt{qR})}  }
{H^\sigma,F_0^k,{F_{l}^k}^{\sigma\circ\rho\circ\pi},\bold{q(H,{F_0^1}^{\sigma\circ\rho},\ldots,{F_0^n}^{\sigma\circ\rho})} \vdash_l  q(H,{F_{j}^1}^{\sigma\circ\rho\circ\pi},\ldots,{F_{j}^n}^{\sigma\circ\rho\circ\pi})}{(\mathtt{qL})}  }
{H,\bold{q(F_0^k,{F_1^k}^{\rho\circ\pi},\ldots,{F_{n}^k}^{\rho\circ\pi})} ,q(H,{F_0^1}^\rho,\ldots,{F_0^n}^\rho) \vdash_k q(H,{F_{j}^1}^{\rho\circ\pi},\ldots,{F_{j}^n}^{\rho\circ\pi}) }  {(\mathtt{qL})}  }
{\bold{H_1^\pi},q(H,F_0^1,\ldots,F_0^n)\vdash_j q(H,{F^1_{j}}^\pi,\ldots,{F^n_{j}}^\pi) }  {(\mathtt{qL})}  }
{H_1\der_i \bold{H_2}}  {(\mathtt{qR})}
$$
}

\noindent We denote by $S$ the uppermost sequent of the above branch of the proof. We are going to show that $S$  is provable, by case analysis on $j,k,l,h,m$. First of all, if $h\not= m$, then 
$H^{\psi},H\vdash_m$ is provable by Lemma \ref{lem:neg0} and $S$ is provable by weakening. Hence we are left with the case $h=m$;
by applying the rule $(\mathtt{ConL})$, the uppermost sequent $S$ becomes:
$$H^{\tau\circ\sigma},{F_0^k}^{\tau},{F_{l}^k}^{\tau\circ\sigma\circ\rho\circ\pi},H,{F_0^h}^{\tau\circ\sigma\circ\rho}\vdash_h {F_{j}^h}^{\tau\circ\sigma\circ\rho\circ\pi}.$$
We analyse the cases $k\neq h$ and $k=h$, both splitting in two other cases.
\begin{itemize}
\item[($k\neq h$):] The sequent $H^{\tau\circ\sigma},H\vdash_h$ is provable as follows:
		\begin{itemize}
		\item[($l\neq h$):]  Since in this case $(lh)^{-1}(l) =h= (\tau\circ\sigma)^{-1}(l)$, then we have:
		$$
\infer{\infer{\mathrm{Corollary}\ \ref{cor:id}}{H^{(lh)}\vdash_l H^{\tau\circ\sigma}\quad l\neq h} }
{H, H^{\tau\circ\sigma}\vdash_h}{(\mathtt{Neg1})}
$$
		\item[($l=h$):] $H^{\sigma},H\vdash_l$ follows from Lemma \ref{lem:neg0}.  
		\end{itemize}

\item[($k=h$):] In this case $\tau=\sigma$ and  $\sigma\circ\tau= \tau\circ\sigma=id$. The uppermost sequent $S$ becomes:
$$H,{F_0^k}^{\tau},{F_{l}^k}^{\rho\circ\pi},H,{F_0^k}^{\rho}\vdash_k {F_{j}^h}^{\rho\circ\pi}$$
		\begin{itemize}
		\item[($j\neq l$):] ${F_0^k}^{\tau},{F_0^k}^{\rho}\vdash_k$ is provable as follows, recalling that $k=h$:
		$$
\infer{{F_0^k}^{(jk)\circ(lk)},{F_0^k} \vdash_j}
{{F_0^k}^{(lk)},{F_0^k}^{(jk)}\vdash_k}{(\mathtt{Sym})}
$$
Now the proof of  ${F_0^k}^{(jk)\circ(lk)},F_0^k \vdash_j$ is exactly like that of  $H^{\tau\circ\sigma},H\vdash_h$ above, by considering that
$[(jk)\circ(lk)]^{-1}(j)=l\neq j = id^{-1}(j)$.
		\item[($j= l$):] Both side of the  uppermost sequent contain ${F_{l}^k}^{\rho\circ\pi}$, and we are done.
		\end{itemize}
\end{itemize}
\noindent The proof of ${H_2\der_i H_1}$ is similar.  \qedhere
\end{enumerate}
\end{proof}

\subsection{Completeness of  $n\mathrm{LK}$}\label{sec:ideals}
We start by showing that all valid sequents of the form $\Gamma\der_i F$ are
provable, then we generalise to all the valid sequents of the  form $\Gamma\der_i \Delta$.
Let $\Gamma, F,i$ be such that $\Gamma\models_i F$. To begin with, we define a multideal in   the Lindenbaum algebra of $n\mathrm{LK}$ (see Definition \ref{def:fide})  depending on $\Gamma$ and $i$.
Recall that if $\prov \Gamma i F$  and $F\sim G$ then, by a simple application of  rule $ {(\mathtt{Cut})}$, we get
$\prov \Gamma i G$. 

We denote by $[F]$ the equivalence class of the formula $F$ with respect to the equivalence $\sim$.

\begin{definition}
Given $i\in\hat n$ and a  set  of formulas $\Gamma$, we define ${\mathcal I}^{(\Gamma,i)}_j=\{[F] : \prov {\Gamma} i F^{(ij)}\}$, for $j=1,\dots,n$.
\end{definition} 

\begin{lemma}
 ${\mathcal I}^{(\Gamma,i)}$ is a multideal of the Lindenbaum algebra $L_n$ of $n\mathrm{LK}$.
\end{lemma}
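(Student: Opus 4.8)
The plan is to verify directly that the tuple $\mathcal{I}^{(\Gamma,i)} = (\mathcal{I}^{(\Gamma,i)}_1, \dots, \mathcal{I}^{(\Gamma,i)}_n)$ satisfies the three defining conditions of a multideal (Definition \ref{def:fide}), together with the implicit requirement that it is an $n$-partition of its carrier $I = \bigcup_j \mathcal{I}^{(\Gamma,i)}_j$. For the partition property, I first note that the classes $\mathcal{I}^{(\Gamma,i)}_j$ are pairwise disjoint: if $[F]$ belonged to both $\mathcal{I}^{(\Gamma,i)}_j$ and $\mathcal{I}^{(\Gamma,i)}_k$ with $j \neq k$, then $\prov{\Gamma}{i}{F^{(ij)}}$ and $\prov{\Gamma}{i}{F^{(ik)}}$; applying $(\mathtt{Neg3})$-style reasoning together with Lemma \ref{Lemma:identity}(iii) (which gives $\prov{F^{(ij)},F^{(ik)}}{i}{}$ after a $\mathtt{Sym}$ adjustment, since $F^{(ij)} = (F^{(jk)})^{(ij)}$ up to the exchange bookkeeping) and two cuts, one derives $\prov{\Gamma}{i}{}$, hence by weakening every formula lies in every component — but then one must check this degenerate case is still a multideal, or more cleanly argue disjointness holds whenever $\Gamma$ is $i$-consistent and treat the inconsistent case as the improper (ultra)multideal. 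I would in fact observe that well-definedness on $\sim$-classes is exactly the remark preceding the definition (cut against $F \sim G$), so the components are genuinely sets of Lindenbaum classes.

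Next, for (m1) I must show $\e_k \in \mathcal{I}^{(\Gamma,i)}_k$, i.e. $\prov{\Gamma}{i}{\e_k^{(ik)}}$. Now $\e_k^{(ik)} = \e_{(ik)(k)} = \e_i$, and $\prov{}{i}{\e_i}$ is the axiom $(\mathtt{Const})$, so repeated $\mathtt{WeakL}$ gives $\prov{\Gamma}{i}{\e_i}$, as required. For (m3), suppose $c_1,\dots,c_n \in \mathcal{I}^{(\Gamma,i)}_k$, so $\prov{\Gamma}{i}{c_r^{(ik)}}$ for each $r$, and let $a \in A$ be arbitrary; I must show $q(a,c_1,\dots,c_n) \in \mathcal{I}^{(\Gamma,i)}_k$, i.e. $\prov{\Gamma}{i}{q(a,c_1,\dots,c_n)^{(ik)}}$. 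Since $q(a,c_1,\dots,c_n)^{(ik)} = q(a, c_1^{(ik)}, \dots, c_n^{(ik)})$, I apply $(\mathtt{qR})$: the $j$-th premise is $\Gamma^{(ij)}, a \der_j (c_j^{(ik)})^{(ij)}, \Delta^{(ij)}$ (with $\Delta$ empty); using Lemma \ref{Lemma:perm} to simplify the composed permutation and the fact that $\prov{\Gamma}{i}{c_k^{(ik)}}$ transfers across dimensions via $(\mathtt{Sym})$ when $j = k$, together with $\mathtt{WeakL}$ to absorb $a$ and Lemma \ref{Lemma:identity}(iii) or $G \der_j G$ for the off-diagonal premises, each premise is seen to be provable.

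Condition (m2) is the substantive one and I expect it to be the main obstacle. Here $a \in \mathcal{I}^{(\Gamma,i)}_r$, $b \in \mathcal{I}^{(\Gamma,i)}_k$, so $\prov{\Gamma}{i}{a^{(ir)}}$ and $\prov{\Gamma}{i}{b^{(ik)}}$, and I must show $\prov{\Gamma}{i}{q(a,c_1,\dots,c_{r-1},b,c_{r+1},\dots,c_n)^{(ik)}}$. Pushing the exchange $(ik)$ inside, this is $q(a, c_1^{(ik)},\dots,b^{(ik)},\dots,c_n^{(ik)})$ with $b^{(ik)}$ in the $r$-th slot; I apply $(\mathtt{qR})$ and must discharge the premise $\Gamma^{(ij)}, a \der_j (\text{$j$-th component})^{(ij)}$ for each $j \in \hat n$. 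The key point is the interplay between the "selector" $a$ (whose $i$-value is forced to $r$ by the hypothesis $\prov{\Gamma}{i}{a^{(ir)}}$, via $(\mathtt{Cut})$ against the appropriate instance and $(\mathtt{Sym})$) and the branch index: on the branch $j$ corresponding to $a$ pointing at slot $r$, the premise reduces to proving something about $b^{(ik)}$, which follows from $\prov{\Gamma}{i}{b^{(ik)}}$ after the same permutation bookkeeping as in (m3); on all other branches one uses that $a^{(ij)}$ and the hypothesis on $a$ together yield $\prov{\Gamma^{(ij)}, a}{j}{}$ by a cut, collapsing the premise by weakening. The delicate part is managing the nested permutation composites $(\mathtt{Sym})$/Lemma \ref{Lemma:perm} throw up, and ensuring the case split on $j$ versus $r$ versus $k$ is exhaustive; this is analogous to, but simpler than, the branch analysis in the proof of (B3) above, so I would model the write-up on that. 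Finally, I note the carrier $I$ is automatically closed under these operations since membership of the output in some component is exactly what (m2), (m3) assert, so no separate verification of "closure of $I$" is needed beyond the above.
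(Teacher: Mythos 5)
Your proposal is correct and takes essentially the same route as the paper: a direct verification of (m1)--(m3), building the $(\mathtt{qR})$ derivation and discharging each premise either from the relevant hypothesis via $(\mathtt{Sym})$ plus weakening, or, on the branches $j\neq r$ of (m2), by first deriving the empty-succedent sequent $\Gamma^{(ij)},a\der_j$ from $\prov{\Gamma}{i}{a^{(ir)}}$ (the paper obtains this in a single step with $(\mathtt{Neg2})$, where you invoke a cut against an auxiliary refutation sequent; both work). Your additional check that the components are pairwise disjoint unless $\prov{\Gamma}{i}{}$ is a genuine point the paper passes over in silence, since a multideal is by definition an $n$-partition. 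One small correction: in (m3) there is no off-diagonal case requiring Lemma \ref{Lemma:identity}(iii) or $G\der_j G$ --- since \emph{all} the $c_j$ lie in $\mathcal{I}^{(\Gamma,i)}_k$, every one of the $n$ premises is discharged uniformly from $\Gamma\der_i c_j^{(ik)}$ by $(\mathtt{Sym})$ and $(\mathtt{WeakL})$.
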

\begin{proof} 
We show that  $({\mathcal I}^{(\Gamma,i)}_j)_{j\in\hat n}$ satisfies the conditions of  Definition \ref{def:fide}.
\begin{itemize}
\item $\e_j\in {\mathcal I}^{(\Gamma,i)}_j$ is  trivially true.

\item if $F\in {\mathcal I}^{(\Gamma,i)}_j$ and $G\in{\mathcal I}^{(\Gamma,i)}_k$ then $q(G,H_1,\ldots,H_{k-1},F,H_{k+1},\ldots,H_n)\in {\mathcal I}^{(\Gamma,i)}_j$, for all formulas $H_1,\ldots,H_n$:

$$
\infer{
{\Big\{\Gamma^{(il)},G \der_l (H^{(ij)}_l)^{(il)}\Big\}_{l\in\hat n\setminus \{k\}}} \\  {\Gamma^{(ik)},G\der_k (F^{(ij)})^{(ik)}}  } 
{\Gamma\der_i q(G,H^{(ij)}_1,\ldots,H^{(ij)}_{k-1},F^{(ij)},H^{(ij)}_{k+1},\ldots,H^{(ij)}_n)} {(\mathtt{qR})}
$$
The rightmost antecedent follows from $\Gamma\der_i F^{(ij)}$ by  rules ${(\mathtt{Weak})+(\mathtt{Sym})}$.
Concerning the leftmost antecedents, we show that they are provable for $l\neq k$.

$$
  \infer
          {\infer{\Gamma\der_i G^{(ik)}} {\Gamma^{(il)},G \der_l}
            {(\mathtt{Neg2})}} 
{\Gamma^{(il)},G \der_l (H^{(ij)}_l)^{(il)}} {(\mathtt{WeakR})}
$$

\item if $F_1,\ldots,F_n\in {\mathcal I}^{(\Gamma,i)}_j$ then, for all $G$, $q(G,F_1,\ldots,F_n)\in {\mathcal I}^{(\Gamma,i)}_j$:

$$
\infer
{\Big\{\infer {\Gamma\der_i F^{(ij)}_k}
{\Gamma^{(ik)},G \der_k (F^{(ij)}_k)^{(ik)}}  {(\mathtt{Sym+Weak})}\Big\}_{k\in\hat n} }
{\Gamma\der_i q(G,F^{(ij)}_1,\ldots,F^{(ij)}_n)} {(\mathtt{qR})}
$$
\end{itemize}
\end{proof}

Given a proper ultramultideal $U=(U_1,\dots,U_n)$ on $L_n$, let $v^U$ be the environment  such that $v^U(X)=i$  if $[X^{id}]\in U_i$. The environment $v^U$ is well-defined since $\bigcup_{i=1}^n U_i = L_n$ and $U_i\cap U_j = \emptyset$ for $i\neq j$.

\begin{lemma} \label{Lemma:envU}

For all formulas 
$F$, $\sem F {v^U}=i$ if and only if $[F]\in U_i$.  
\end{lemma}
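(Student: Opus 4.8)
The plan is to prove Lemma~\ref{Lemma:envU} by induction on the structure of the formula $F$, using the definition of $v^U$, the semantics of $n$PC, and the multideal axioms (m1)--(m3) together with the fact that an ultramultideal is a genuine $n$-partition of the whole Lindenbaum algebra $L_n$.

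\begin{proof}
We argue by induction on $F$.

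\textbf{Decorated variables.} Suppose $F = X^\pi$. Let $r = v^U(X)$, so that by definition $[X^{id}]\in U_r$. We must show $\sem{X^\pi}{v^U}=i$ iff $[X^\pi]\in U_i$. By Lemma~\ref{Lemma:permutations}, $X^\pi\sim q(X^{id},\e_{\pi(1)},\dots,\e_{\pi(n)})$, hence $[X^\pi]$ and $[q(X^{id},\e_{\pi(1)},\dots,\e_{\pi(n)})]$ lie in the same block of the partition $U$ (provability, and hence membership in each $U_j$, is invariant under $\sim$). Since $[X^{id}]\in U_r$ and $[\e_{\pi(r)}]\in U_{\pi(r)}$ by (m1), axiom (m2) (applied with the central element $[X^{id}]$ occupying the $r$-th argument position) gives $[q(X^{id},\e_{\pi(1)},\dots,\e_{\pi(n)})]\in U_{\pi(r)}$, so $[X^\pi]\in U_{\pi(r)}$. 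Because $U$ is an $n$-partition of all of $L_n$, its blocks are pairwise disjoint and cover everything, so $[X^\pi]\in U_i$ iff $i=\pi(r)=\pi(v^U(X))=\sem{X^\pi}{v^U}$, as required.

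\textbf{Constants.} Suppose $F = \e_k$. Then $\sem{\e_k}{v^U}=k$, and $[\e_k]\in U_k$ by (m1). Disjointness of the blocks again yields $[\e_k]\in U_i$ iff $i=k$.

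\textbf{Compound formulas.} Suppose $F = q(G,H_1,\dots,H_n)$. Let $r=\sem G{v^U}$; by the inductive hypothesis applied to $G$, $[G]\in U_r$, and $r$ is the unique index with this property. Also $\sem F{v^U}=\sem{H_r}{v^U}$; call this value $s$, so by the inductive hypothesis applied to $H_r$ we have $[H_r]\in U_s$ and $s$ is unique with this property. We must show $[F]=[q(G,H_1,\dots,H_n)]\in U_i$ iff $i=s$. For the forward implication of the membership-equivalence we use (m2) with the central element $[G]\in U_r$ in the first position and $[H_r]\in U_s$ in the $r$-th argument position: more precisely, (m2) states that if $[G]\in U_r$ and $[H_r]\in U_s$ then $q([G],c_1,\dots,c_{r-1},[H_r],c_{r+1},\dots,c_n)\in U_s$ for all $c_1,\dots,c_n$; taking $c_t=[H_t]$ gives $[q(G,H_1,\dots,H_n)]\in U_s$. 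Hence $[F]\in U_s$. Since $U$ is a partition of $L_n$, this single block membership determines everything: $[F]\in U_i$ iff $i=s=\sem F{v^U}$. This completes the induction.
\end{proof}

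The main obstacle here is bookkeeping rather than conceptual: one must invoke the multideal axioms (m1)--(m3) in exactly the right argument slots, and one must repeatedly use that an \emph{ultra}multideal is an $n$-partition of \emph{all} of $L_n$ (not merely of a subset), so that ``$[F]$ belongs to some block $U_j$'' pins down $j$ uniquely. The variable case also quietly relies on Lemma~\ref{Lemma:permutations} to reduce a decorated variable to a $q$-term over the canonical atom $X^{id}$, which is what makes the environment $v^U$ well-defined and the induction go through.
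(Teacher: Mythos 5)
Your proof is correct and follows exactly the route the paper sketches (the paper's own proof is a one-line remark: induction on formulas, using $X^\pi\sim q(X^{id},\e_{\pi(1)},\dots,\e_{\pi(n)})$ for the variable case). Your write-up simply supplies the details the paper omits --- the appeals to (m1), (m2) and to the fact that an ultramultideal partitions all of $L_n$ --- and these are all applied correctly.
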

\begin{proof}

The proof is by induction on $F$. 

($\Leftarrow$) We analyse only the case $F=X^\rho$.
Let  $[X^\rho]=[ q(X^{id},\e_{\rho(1)},\ldots,\e_{\rho(n)})]\in U_i$. Since $U$ is a ultramultideal, then there exists $j$ such that $[X^{id}]\in U_j$. Then, by definition of multideal, $\e_{\rho(j)}\in U_i$. This implies $\rho(j)=i$ since $U$ is proper. We conclude that $\sem {X^\rho} {v^U}=\rho(v^U(x))=\rho(j)=i$.

($\Rightarrow$) Again, we analyse only the case $F=X^\rho$. Let
$\sem {X^\rho} {v^U}=\rho(v^U(X))=i$ and $[X^{id}]\in U_m$, for some $m$, so that
$v^U(X)=m$ and $\rho(m)=i$. Since
$[X^\rho]=[ q(X^{id},\e_{\rho(1)},\ldots,\e_{\rho(n)})]$ and $[X^{id}]\in U_m$, then  $[X^\rho]\in U_{\rho(m)}=U_i$, by definition of multideal.
\end{proof}

\begin{proposition}[Completeness]\label{compl}
If $\Gamma\models_i F$ then $\prov \Gamma i  F$. 
\end{proposition}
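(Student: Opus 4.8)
The plan is to prove the contrapositive via a maximality/extension argument, in the style of a standard Lindenbaum–Henkin completeness proof, but phrased through the machinery of multideals that has already been set up. Assume $\Gamma \not\vdash_i F$. First I would pass to the Lindenbaum algebra $L_n$, which is an $n\mathrm{BA}$ by the preceding theorem, and consider the multideal ${\mathcal I}^{(\Gamma,i)} = ({\mathcal I}^{(\Gamma,i)}_1,\dots,{\mathcal I}^{(\Gamma,i)}_n)$ just shown to exist. Since $\Gamma \not\vdash_i F$, by definition of ${\mathcal I}^{(\Gamma,i)}_i$ we have $[F^{(ii)}] = [F] \notin {\mathcal I}^{(\Gamma,i)}_i$, so the multideal is proper (its carrier is not all of $L_n$; more precisely $[F]$ fails to lie in the $i$-th component).

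Next I would invoke Proposition \ref{prop:inter}: every multideal is the intersection of all ultramultideals containing it. Since $[F] \notin {\mathcal I}^{(\Gamma,i)}_i$, there must be an ultramultideal $U = (U_1,\dots,U_n)$ with ${\mathcal I}^{(\Gamma,i)} \subseteq U$ (componentwise) such that $[F] \notin U_i$. Because $[F]$ must land in some component, $[F] \in U_j$ for some $j \neq i$. Now I would use Lemma \ref{Lemma:envU}: the environment $v^U$ satisfies $\sem{G}{v^U} = k$ iff $[G] \in U_k$. The Remark preceding the definition of $v^U$ records that every formula of $\Gamma$ lies in ${\mathcal I}^{(\Gamma,i)}_i \subseteq U_i$, hence $\sem{G}{v^U} = i$ for all $G \in \Gamma$. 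On the other hand $[F] \in U_j$ with $j \neq i$ gives $\sem{F}{v^U} = j \neq i$. Thus $v^U$ witnesses $\Gamma \not\models_i F$, which is the contrapositive of the claim.

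The one genuine gap to fill is the existence of the required ultramultideal, i.e. extracting from Proposition \ref{prop:inter} an ultramultideal whose $i$-th component misses $[F]$. This is immediate: if \emph{every} ultramultideal $U \supseteq {\mathcal I}^{(\Gamma,i)}$ had $[F] \in U_i$, then $[F]$ would belong to the $i$-th component of the intersection, contradicting $[F] \notin {\mathcal I}^{(\Gamma,i)}_i$. I expect this to be the main (and only real) obstacle, and it is minor; the rest is bookkeeping with the lemmas already proved. A subtlety worth stating explicitly is that provability is $\sim$-invariant (noted just before the definition of ${\mathcal I}^{(\Gamma,i)}$), so working with equivalence classes $[F]$ rather than formulas causes no loss: $\Gamma \vdash_i F$ iff $[F] \in {\mathcal I}^{(\Gamma,i)}_i$, and this is what connects derivability back to membership in the multideal.

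\begin{proof}
We prove the contrapositive: if $\Gamma \not\vdash_i F$, then $\Gamma \not\models_i F$.

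Work in the Lindenbaum algebra $L_n = {\mathcal F}_n/{\sim}$, an $n\mathrm{BA}$. By the preceding lemma, ${\mathcal I}^{(\Gamma,i)} = ({\mathcal I}^{(\Gamma,i)}_1,\dots,{\mathcal I}^{(\Gamma,i)}_n)$ is a multideal of $L_n$. Since provability is $\sim$-invariant, $[F] \in {\mathcal I}^{(\Gamma,i)}_i$ holds if and only if $\prov{\Gamma}{i}{F}$. Hence the assumption $\Gamma \not\vdash_i F$ yields $[F] \notin {\mathcal I}^{(\Gamma,i)}_i$.

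By Proposition \ref{prop:inter}, ${\mathcal I}^{(\Gamma,i)}$ is the intersection of all ultramultideals containing it. If every ultramultideal $U = (U_1,\dots,U_n) \supseteq {\mathcal I}^{(\Gamma,i)}$ satisfied $[F] \in U_i$, then $[F]$ would lie in the $i$-th component of the intersection, i.e. $[F] \in {\mathcal I}^{(\Gamma,i)}_i$, a contradiction. So there is an ultramultideal $U = (U_1,\dots,U_n)$ with ${\mathcal I}^{(\Gamma,i)} \subseteq U$ (componentwise) and $[F] \notin U_i$. Since $(U_1,\dots,U_n)$ is an $n$-partition of the whole algebra, $[F] \in U_j$ for some $j \neq i$.

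Consider the environment $v^U$. By Lemma \ref{Lemma:envU}, for every formula $G$ we have $\sem{G}{v^U} = k$ if and only if $[G] \in U_k$. Every formula $G \in \Gamma$ lies in ${\mathcal I}^{(\Gamma,i)}_i \subseteq U_i$, so $\sem{G}{v^U} = i$ for all $G \in \Gamma$. On the other hand, $[F] \in U_j$ with $j \neq i$, so $\sem{F}{v^U} = j \neq i$. Thus $v^U$ is an environment making every formula of $\Gamma$ take value $i$ while $F$ does not take value $i$; that is, $\Gamma \not\models_i F$.
\end{proof}
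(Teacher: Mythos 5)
Your proof is correct and follows essentially the same route as the paper's: pass to the Lindenbaum algebra, observe $[F]\notin{\mathcal I}^{(\Gamma,i)}_i$, use Proposition \ref{prop:inter} to extract an ultramultideal $U$ with $\Gamma\subseteq U_i$ and $[F]\notin U_i$, and read off a countermodel via $v^U$ and Lemma \ref{Lemma:envU}. You merely make explicit two steps the paper leaves implicit (the extraction of the ultramultideal from the intersection characterisation, and the $\sim$-invariance of provability), which is fine.
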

\begin{proof}
Let us suppose that  $\Gamma\not\der_i F$, so that 
$[F]\not\in {\mathcal I}^{(\Gamma,i)}_i$ and the multideal
${\mathcal I}^{(\Gamma,i)}$ is proper.  Then by Proposition \ref{prop:ext1} there exists
a proper  ultramultideal $U$ extending  ${\mathcal I}^{(\Gamma,i)}$ 
such 
$[F]\not\in U_i$.
Remark that for all $G\in\Gamma$, $[G]\in{\mathcal I}^{(\Gamma,i)}_i\subseteq U_i$,
by Corollary \ref{cor:id}.
Hence by Lemma \ref{Lemma:envU}  $\sem F {v^U}\not=i$ and, for all $G\in\Gamma$,  $\sem G {v^U}=i$,
proving that $\Gamma\not\models_i F$.
\end{proof}

In the final part of this section we generalise the completeness to sequents with several formulas in the right-hand side.

Let $\Delta \equiv G_1,\dots,G_k$ be a sequence of  $\mathcal F_n$-formulas and $i\in \hat n$. We define a formula $F^i(\Delta)$ by induction on the length $k$ of the sequence. As a matter of notation, $\Delta\setminus G_1$ denotes the sequence $G_2,\dots,G_k$.
If $k=0$, then we define $F^i(\Delta)=\e_j$ with $j\neq i$\footnote{Every $\e_j$ with $j\neq i$ is
suitable. We can choose $j$ minimum such that $j\neq i$, for instance.}. 
 If $k=1$, then $F^i(\Delta)=G_1$. If $k>1$, then 
 $F^i(\Delta)=q(G_1,F^i(\Delta\setminus G_1),\dots,F^i(\Delta\setminus G_1),G_1,F^i(\Delta\setminus G_1),\dots F^i(\Delta\setminus G_1))$,
 where $G_1$ is at position $i+1$.
 
 \begin{lemma}\label{lem:ij}
   If $\Delta$ contains at least one formula and $i,j\in\hat n$, then
   $(F^i(\Delta))^{(ij)}\sim (F^j(\Delta^{(ij)})$.
   \end{lemma}
 \begin{proof}
   By induction on $|\Delta|\geq 1$.
   If $\Delta=G$, then $(F^i(\Delta))^{(ij)}=(F^j(\Delta^{(ij)})=G^{(ij)}$.
   If $\Delta= G,\Delta'$ then
   $$(F^i(\Delta))^{(ij)}= q(G, F^i(\Delta')^{(ij)},\ldots,F^i(\Delta')^{(ij)},
   G^{(ij)},F^i(\Delta')^{(ij)},\ldots,F^i(\Delta')^{(ij)})
   $$ where $G^{(ij)}$ is at position $i+1$, and
 $$ (F^j(\Delta^{(ij)})=q(G^{(ij)}, F^j(\Delta'^{(ij)} ),\ldots, F^j(\Delta'^{(ij)} ),
   G^{(ij)}, F^j(\Delta'^{(ij)}),\ldots, F^j(\Delta'^{(ij)}))
   $$  where $G^{(ij)}$ is at position $j+1$. It is clear that for all envionments $v$,
   $\sem {(F^i(\Delta))^{(ij)}} v = \sem {(F^j(\Delta^{(ij)}))} v  $, so that 
   $(F^i(\Delta))^{(ij)}\models_k (F^j(\Delta^{(ij)})$ for all $k$, and conversely. Hence 
 $(F^i(\Delta))^{(ij)}\sim (F^j(\Delta^{(ij)})$ follows by Proposition \ref{compl}.
 \end{proof}

\begin{lemma}\label{lem:unomolti}
  For all sequences $\Gamma$ and $\Delta$ of  $\mathcal F_n$-formulas, and for all $i\in \hat n$,we have: 
  $$\prov \Gamma i \Delta\ \Leftrightarrow\ \prov \Gamma i {F^i(\Delta)}.$$
\end{lemma}

\begin{proof} ($\Rightarrow$) By Proposition \ref{sound} we have that $\prov \Gamma i \Delta$ implies $\Gamma \models_i \Delta$.
It is an easy exercise to verify that $\Gamma \models_i \Delta$ if and only if $\Gamma \models_i F^i(\Delta)$. Therefore, by Proposition  \ref{compl} we get the conclusion.

($\Leftarrow$) 
Let $k$ be the size of $\Delta$. The case $k=0$ is settled by the following proof,
where the leftmost sequent is provable by hypothesis, and the rightmost one by
Lemma \ref{Lemma:identity}:

$$
\infer {
  \infer{} {\Gamma\vdash_i \e_j} {}
  \quad\quad  \infer{} {\e_j \vdash_i} {}
}
{\Gamma\vdash_i} {(\mathtt{Cut})}
$$
Then, we proceed by induction on $k>0$. The case $k=1$ is trivial since if $\Delta=G$ then
$F^i(\Delta)=G$. Let $\Delta=G,\Delta'$ be a sequence of length $k>1$.  By induction hypothesis we have that
$\prov {\Gamma''} k \Delta''\ \Leftrightarrow\ \prov {\Gamma''} j {F^k(\Delta'')}$, for all $j\in\hat n$ and all sequences $\Gamma'',\Delta''$ with $|\Delta''|< k$. The first step in the proof of $\Gamma\vdash_i G,\Delta'$ is $(\mathtt{NegR})$:

$$
\infer
    {\infer {}
    {\left\{\Gamma^{(ij)}, G\vdash_j\Delta'^{(ij)}\right\}_{j\neq i}}{}
          }
    {\Gamma\vdash_i G,\Delta'}
    {(\mathtt{NegR})}
$$
We are left with the problem of proving $\Gamma^{(ij)}, G\vdash_j\Delta'^{(ij)}$ for all $j\neq i$.
Since by hypothesis $\prov  \Gamma i q(G, F^i(\Delta'),\ldots,F^i(\Delta'),G,F^i(\Delta'),\ldots,F^i(\Delta'))$, we can use the soundness of $n$LK, the
invertibility of the rule $(\mathtt{qR})$ (see Lemma \ref{lem:rev}), and the completeness result
of Proposition \ref{compl}, to get $\prov {\Gamma^{(ij)}, G} j {F^i(\Delta')^{(ij)}}$, for all $j\neq i$. Using   Lemma \ref{lem:ij} we get $\prov {\Gamma^{(ij)}, G} j {F^j(\Delta'^{(ij)})}$. Now, the inductive hypothesis allows us to conclude that $\prov {\Gamma^{(ij)}, G} j {\Delta'^{(ij)}}$, and we are done.
\end{proof}

\begin{theorem}\label{soundcompl}
  For all sequences $\Gamma$ and $\Delta$ of  $\mathcal F_n$-formulas and for all $i\in \hat n$, we have that 
$\Gamma\models_i \Delta $ iff $\prov \Gamma i \Delta$. 
\end{theorem}

\begin{proof}
By Lemma \ref{lem:unomolti} and by Propositions \ref{sound} and \ref{compl}.
  \end{proof}

\subsection{The classical case: semantics}\label{sec:class_cont}

We are now able to prove the equivalence of $2\mathrm{LK}$ and $\mathrm{PC}$, through the translations $(\_)^\circ$ and$(\_)^\bullet$   defined in Section \ref{nefertiti}. 

  Each environment of $2\mathrm{LK}$ may be seen as an environment of $\mathrm{PC}$, and conversely, simply by exchanging the truth values 2 and 0. In the sequel, we will keep this exchange implicit.

  \begin{lemma}\label{lem:pc2pc}
    Let $F$ be a formula of $2\mathrm{LK}$, $P$ a formula of $\mathrm{PC}$ and $v$  an environment.
    Then $\sem F v =1$ iff $\sem {F^\bullet} {v}=1$ and
    $\sem P v =1$ iff $\sem {P^\circ} {v}=1$.
\end{lemma}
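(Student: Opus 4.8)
The plan is to prove both biconditionals simultaneously by structural induction: the first on the structure of the 2PC formula $F$, and the second on the structure of the PC formula $P$. The two inductions are almost independent, except that the clause for $q(F,G,H)^\bullet$ unfolds into a PC formula built with $\wedge,\vee,\neg$ whose subformulas are $F^\bullet,G^\bullet,H^\bullet$, so we only ever invoke the inductive hypothesis on strictly smaller 2PC formulas; symmetrically for the $P^\circ$ direction. Throughout, I will use the convention (made explicit just before the lemma) that a 2PC environment and a PC environment are identified by swapping the truth values $2$ and $0$, so that ``$\sem{\cdot}{v}=1$'' has an unambiguous meaning on both sides.

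First I would handle the statement $\sem F v = 1 \iff \sem{F^\bullet}{v}=1$. For the base cases: if $F=\e_1$ then $F^\bullet = 1$ and $\sem{\e_1}{v}=1=\sem 1 v$; if $F=\e_2$ then $F^\bullet=0$ and $\sem{\e_2}{v}=2\neq 1$ while $\sem 0 v = 0 \neq 1$; if $F = X^{id}$ then $F^\bullet = X$ and $\sem{X^{id}}{v}=v(X)$, which equals $1$ iff $\sem X v = 1$; if $F = X^{(12)}$ then $F^\bullet = \neg X$ and $\sem{X^{(12)}}{v}=(12)(v(X))$, which is $1$ iff $v(X)=2$ iff (under the identification) $\sem X v = 0$ iff $\sem{\neg X}{v}=1$. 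For the inductive step $F = q(G_0,G_1,G_2)$, note $\sem{q(G_0,G_1,G_2)}{v}=\sem{G_{\sem{G_0}v}}{v}$, so $\sem F v = 1$ iff either ($\sem{G_0}v = 1$ and $\sem{G_1}v=1$) or ($\sem{G_0}v=2$ and $\sem{G_2}v=1$). By the inductive hypothesis this is equivalent to either ($\sem{G_0^\bullet}v=1$ and $\sem{G_1^\bullet}v=1$) or ($\sem{G_0^\bullet}v\neq 1$ and $\sem{G_2^\bullet}v=1$) — here I use that $\sem{G_0}v$ takes only the values $1,2$ and that $\sem{G_0^\bullet}v\neq 1$ corresponds to $\sem{G_0}v=2$ — which is exactly the truth condition for $\sem{(G_0^\bullet\wedge G_1^\bullet)\vee(\neg G_0^\bullet\wedge G_2^\bullet)}{v}=1$, i.e. $\sem{q(G_0,G_1,G_2)^\bullet}{v}=1$.

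Then I would do the symmetric induction for $\sem P v = 1 \iff \sem{P^\circ}{v}=1$. The constant and variable cases are immediate from the definitions $0^\circ=\e_2$, $1^\circ=\e_1$, $X^\circ = X^{id}$, together with the value-swap convention. For $P=\neg Q$, we have $(\neg Q)^\circ = (Q^\circ)^{(12)}$ and, by Lemma \ref{Lemma:sym}, $\sem{(Q^\circ)^{(12)}}{v}=1 \iff \sem{Q^\circ}{v}=2 \iff \sem{Q^\circ}v \neq 1$ (again using that $\sem{Q^\circ}v\in\{1,2\}$), which by the inductive hypothesis is $\sem Q v \neq 1$, i.e. $\sem{\neg Q}v = 1$. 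For $P = Q\wedge R$, $(Q\wedge R)^\circ = q(Q^\circ,R^\circ,\e_2)$ and $\sem{q(Q^\circ,R^\circ,\e_2)}{v}=1$ iff $\sem{Q^\circ}v=1$ and $\sem{R^\circ}v=1$ (if $\sem{Q^\circ}v=2$ the value is $\sem{\e_2}v=2\neq 1$), which by induction is $\sem Q v = 1$ and $\sem R v = 1$, i.e. $\sem{Q\wedge R}v=1$. For $P = Q\vee R$, $(Q\vee R)^\circ = q(Q^\circ,\e_1,R^\circ)$ and $\sem{q(Q^\circ,\e_1,R^\circ)}v = 1$ iff $\sem{Q^\circ}v=1$ (then the value is $\sem{\e_1}v=1$) or $\sem{Q^\circ}v=2$ and $\sem{R^\circ}v=1$, which by induction is $\sem Q v = 1$ or $\sem R v=1$, i.e. $\sem{Q\vee R}v=1$.

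There is no genuine obstacle here; the only point requiring a little care is bookkeeping the value-swap convention consistently, in particular the repeated use of the fact that the $q$-semantics of 2PC takes only the two values $1$ and $2$, so that ``not $1$'' and ``$=2$'' are interchangeable — this is what makes the PC clause for $\neg$ and the $q$-clause for $\bullet$ match up. I would state that observation once at the start of the proof and then invoke it freely.
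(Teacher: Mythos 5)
Your proof is correct and follows exactly the route the paper takes, which simply states that both claims follow by straightforward inductions on $F$ and $P$ respectively; you have merely filled in the details (base cases, the $q$/connective clauses, and the value-swap convention) that the paper leaves implicit.
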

\begin{proof}
Both statements are proven by  straightforward  inductions, on $F$ and $P$ respectively.
\end{proof}

  \begin{corollary}\label{try}

    \begin{enumerate}
      \item  \label{cor:eq}

    Let $\Gamma$, $\Delta$ be sequences of $2\mathrm{LK}$ formulas and
     $\Gamma_1$, $\Delta_1$ sequences of $\mathrm{PC}$ formulas. Then  
     $\Gamma \models_1 \Delta$ iff $\Gamma^\bullet \models \Delta^\bullet$, and 
  $\Gamma_1 \models  \Delta_1$  iff $\Gamma_1^\circ \models_1 \Delta_1^\circ$.
\item \label{equiv}
  Let $\Gamma$, $\Delta$ be sequences of $2\mathrm{LK}$ formulas and
     $\Gamma_1$, $\Delta_1$ sequences of $\mathrm{PC}$ formulas. Then  
     $\prov \Gamma 1 \Delta$ iff $\prov {\Gamma^\bullet} {} {\Delta^\bullet}$, and
      $\prov {\Gamma_1}  {} {\Delta_1}$ iff $\prov {\Gamma_1^\circ} 1 {\Delta_1^\circ}$
\end{enumerate}
   \end{corollary}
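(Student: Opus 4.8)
The statement to be proven is Corollary~\ref{try}, which packages two things together: a semantic equivalence (item~\ref{cor:eq}) and a provability equivalence (item~\ref{equiv}). The plan is to prove item~\ref{cor:eq} directly from Lemma~\ref{lem:pc2pc}, and then to derive item~\ref{equiv} from item~\ref{cor:eq} by invoking soundness and completeness of the two calculi (Theorem~\ref{soundcompl} for $2$PC, and the analogous standard fact for PC that is quoted at the beginning of Section~\ref{nefertiti}). The translations $(-)^\bullet$ and $(-)^\circ$ act on sequences of formulas componentwise, so all claims reduce to pointwise statements about single formulas, where Lemma~\ref{lem:pc2pc} applies.

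\emph{Item~\ref{cor:eq}.} Fix an environment $v$; recall that an environment for $2$PC is read as an environment for PC by swapping the truth values $2$ and $0$, and this identification is kept implicit. For the first equivalence, unfold the definition of $\models_1$: $\Gamma\models_1\Delta$ means that for every $v$, if $\sem G v =1$ for all $G\in\Gamma$ then $\sem F v =1$ for some $F\in\Delta$. By Lemma~\ref{lem:pc2pc}, $\sem G v =1$ iff $\sem{G^\bullet}v=1$ for each $2$PC formula $G$, so the hypothesis ``$\sem G v=1$ for all $G\in\Gamma$'' is equivalent to ``$\sem{G^\bullet}v=1$ for all $G^\bullet\in\Gamma^\bullet$'', and similarly for the conclusion about $\Delta$. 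Since classical validity $\Gamma^\bullet\models\Delta^\bullet$ is exactly ``for every $v$, if all formulas of $\Gamma^\bullet$ evaluate to $1$ then some formula of $\Delta^\bullet$ evaluates to $1$'', the two notions coincide. The second equivalence $\Gamma_1\models\Delta_1$ iff $\Gamma_1^\circ\models_1\Delta_1^\circ$ is proved the same way, using the other half of Lemma~\ref{lem:pc2pc} (namely $\sem P v=1$ iff $\sem{P^\circ}v=1$).

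\emph{Item~\ref{equiv}.} Now chain the completeness theorems with item~\ref{cor:eq}. For the first claim: $\prov\Gamma 1\Delta$ iff $\Gamma\models_1\Delta$ (Theorem~\ref{soundcompl}), iff $\Gamma^\bullet\models\Delta^\bullet$ (item~\ref{cor:eq}), iff $\prov{\Gamma^\bullet}{}{\Delta^\bullet}$ (soundness and completeness of PC). The second claim is identical with the roles of the calculi reversed: $\prov{\Gamma_1}{}{\Delta_1}$ iff $\Gamma_1\models\Delta_1$ iff $\Gamma_1^\circ\models_1\Delta_1^\circ$ iff $\prov{\Gamma_1^\circ}1{\Delta_1^\circ}$. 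No genuine obstacle arises here; the only points requiring care are purely bookkeeping: checking that the translations commute with forming multiset-contexts (so the pointwise Lemma~\ref{lem:pc2pc} lifts to sequences), and keeping the implicit $0\leftrightarrow 2$ identification of environments consistent throughout. The substantive content has already been discharged in Lemma~\ref{lem:pc2pc} and in the completeness theorems, so the corollary is essentially a formal consequence.
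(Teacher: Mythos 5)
Your proposal is correct and follows exactly the paper's own route: item (1) is read off from Lemma \ref{lem:pc2pc} by unfolding the two validity notions, and item (2) chains Theorem \ref{soundcompl}, item (1), and soundness/completeness of PC. Your version merely spells out the bookkeeping (componentwise translation of contexts, the implicit $0\leftrightarrow 2$ identification of environments) that the paper leaves implicit.
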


   \begin{proof}
  (1)     follows from Lemma \ref{lem:pc2pc}.

(2)  Since the calculus of Figure \ref{PC} is sound and complete for the Propositional Calculus, the result follows from Theorem \ref{soundcompl} and Corollary
  \ref{try}(1).
    \end{proof}



\section{Equivalence of $n\mathrm{CL}$ and  $n\mathrm{LK}$}\label{sec:equiv}
Let $\tau$ be the algebraic type of the pure $n$BAs, $\mathbf{T}_\tau(V)$ be the absolutely free term $\tau$-algebra over the countable set $V$ of generators, whose universe is the set of
the $n\mathrm{CL}$ formulas, and $\mathbf n$ be the pure $n$BA of universe 
$\{1,\ldots,n\}$ of Example \ref{exa:n}. More explicitly, a $n\mathrm{CL}$ formula is either one of the constants $\e_1,\ldots,\e_n$, or a variable $X\in V$, or
it is of the form $q(F_0,\ldots,F_n)$ for some a $n\mathrm{CL}$ formulas $F_1,\ldots,F_n$. 
A {\em $\tau$-matrix} is a pair $(\mathbf A, F)$ where $\mathbf A$ is a $\tau$-algebra
and $F\subseteq A$ is a set of {\em designated values}.
A {\em $\tau$-matrix} ${\mathcal A}=(\mathbf A, F)$ defines a logic $L=(\tau,\vdash_{\mathcal A})$ as follows: $\Gamma \vdash_{\mathcal A} \phi$ if for any homomorphism
$h:\mathbf{T}_\tau(V)\rightarrow \mathbf A$, if $h(\psi)\in F$ for all
$\psi\in\Gamma$, then $h(\phi)\in F$.

In order to compare the
logics $n\mathrm{CL}$ defined in \cite{SBLP} and the sequent calculus $n\mathrm{LK}$, we focus
on the logics of the form $(\tau,\vdash_{(\mathbf n,\{i\})})$, for $i=1,\ldots n$.
The homomorphisms from 
$\mathbf{T}_\tau(V)$ to $\mathbf n$ and the environments defined in Section \ref{sec:sem} are in bijective correspondence: given a homomorphism $h:\mathbf{T}_\tau(V)\rightarrow \mathbf n$, let $v_h$ be the environment defined by $v_h(X)=h(X)$, for all $X\in V$;  symmetrically,
given $v$, let $h_v$ be the unique homomorphism extending $v$. 
The map  $v\mapsto v_h=(h\mapsto h_v)^{-1}$ is a bijection, such that $h_{v_h}=h$ and $v_{h_v}=v$.

In order to compare the two systems, 
we need to translate $n\mathrm{CL}$ formulas into $n\mathrm{LK}$ formulas.

\begin{definition}
The  translation $(\_)^\dagger$ from the set of $n\mathrm{LK}$ formulas to the set of $n\mathrm{CL}$ formulas is defined as follows:

\begin{itemize}
\item ${\e_{i}}^\dagger=\e_{i}$;
\item $(X^{\rho})^\dagger=q(X, e_{\rho(1)},..., e_{\rho(n)})$;
\item ${q(F, G_{1}, ...,  G_{n})}^\dagger=q( {F}^\dagger, {G_{1}}^\dagger, ..., {G_{n}}^\dagger)$.
\end{itemize}
\end{definition}

The proof of the following lemma, omitted,  is an easy induction on the size of
$n\mathrm{LK}$ formulas. Notice that the item (2) below is obtained by (1) applied to the environment $v_h$.

\begin{lemma}\label{lem:le1}
  For every $n\mathrm{LK}$  formula $F$,  environment $v$ and  homomorphism $h:\mathbf{T}_\tau(V)\rightarrow\mathbf n$, we have that:
\begin{enumerate}
\item  $\sem F v=h_v(F^\dagger)$, and
\item  $h({F}^\dagger)=\sem F {v_h}$.
\end{enumerate}
  \end{lemma}

\begin{lemma}\label{lem:le2}
  For every $n\mathrm{LK}$ sequent $\Gamma\vdash_i F$, we have:
  $\Gamma\models_i F$ iff $\Gamma^\dagger\vdash_{(\mathbf n,\{i\})} F^\dagger$.
\end{lemma}
\begin{proof}
Immediate by Lemma \ref{lem:le1}.
\end{proof}

\begin{proposition}\label{prop:equiv}
   For every $n\mathrm{LK}$ sequent $\Gamma\vdash_i F$, we have that 
$\Gamma\vdash_i F$ iff $\Gamma^\dagger\vdash_{(\mathbf n,\{i\})} F^\dagger$.
\end{proposition}
\begin{proof}
  By Lemma \ref{lem:le2} and Theorem \ref{soundcompl}.
  \end{proof}

For the sake of completeness, we present below the Hilbert-style system, defined in \cite{SBLP}, axiomatising the logic 
$(\tau,\vdash_{(\mathbf n,\{i\})})$.
By
Proposition \ref{prop:equiv}, this axiomatic system is equivalent to $\vdash_i$, up to the $(\_)^\dagger$ translation of formulas.
We write $\vdash_{\e_i}$ for the derivations in the Hilbert-style system.

Given $\varphi,\psi\in\mathbf{T}_\tau(V)$ and $j,k,l\leq n$, let
\begin{itemize}
\item $\overline{\e_j}[\e_k/l]=\e_j,\e_j,\ldots,\e_k,\ldots,\e_j,\e_j$ ($\e_k$ in $l$-th position).
\item $\varphi\leftrightarrow^{jk}\psi=q(\varphi,q(\psi,\overline{\e_k}[\e_j/1]),\ldots,q(\psi,\overline{\e_k}[\e_j/n]))    $
\end{itemize}

Fixing $j\neq i$, the axioms and rules of the axiomatisation of the logic $(\tau,\vdash_{(\mathbf n,\{i\})})$ are the following (\cite{SBLP}):

\[
\begin{array}[c]{ll}
  A1 & \varphi\leftrightarrow^{i,j}\varphi\\
A2 & q(  \varphi,\e_{1},...,\e_{n})  \leftrightarrow^{i,j}\varphi\\
A3 & q(  \varphi,\overrightarrow{\psi})  \leftrightarrow^{i,j}
\psi\\
A4 &
\begin{array}
[c]{l}
q(  \varphi,q(  \psi_{1},\chi_{11},...,\chi_{1n})
,...,q(  \psi_{n},\chi_{n1},...,\chi_{nn})  )  \\
\leftrightarrow^{i,j}q(  q(  \varphi,\psi_{1},...,\psi_{n})
,...,q(  \varphi,\chi_{1n},...,\chi_{nn})  )
\end{array}
\\
R1 & \varphi,\varphi\leftrightarrow^{i,j}\psi\vdash_{\e_i}\psi\\
R2 & \varphi\leftrightarrow^{i,j}\psi\vdash_{\e_i}\psi\leftrightarrow
^{i,j}\varphi\\
R3 &
\begin{array}
[c]{l}
\varphi_{1}\leftrightarrow^{i,j}\psi_{1},...,\varphi_{n+1}\leftrightarrow
^{i,j}\psi_{n+1}\\
\vdash_{\e_i}q(  \varphi_{1},...,\varphi_{n+1})  \leftrightarrow
^{i,j}q(  \psi_{1},...,\psi_{n+1})
\end{array}
\\
R4 & \varphi\vdash_{\e_i}\varphi\leftrightarrow^{i,j}\e_{i}\\
R5 & \varphi\leftrightarrow^{i,j}\e_{i}\vdash_{\e_i}\varphi
\end{array}
\]



\section{Cut Admissibility}\label{sec:main}

We prove in this section that all the valid sequents admit a cut-free proof, thus obtaining an alternative proof of
completeness,  and a cut admissibility result.
Canonical cut-free proofs are obtained by repeatedly applying the invertible rules  $\mathtt{qR}$ and $\mathtt{qL}$ until we attain sequents that contain no occurrence of the $q$ operator. Subsequently, invertible instances of weakening are applied until we reach sequents that contain only variables and constants. If constants are present, these sequents are easily provable without cuts, while for those containing only variables a slightly more complex analysis is required.
The first observation is that the rules 
$\mathtt{qL}$ and $\mathtt{qR}$ are invertible, meaning that if their conclusion is a valid sequent then all their premises are valid sequents.
\begin{lemma}\label{lem:rev}
The rules $\mathtt{qL}$ and $\mathtt{qR}$ are invertible.
\end{lemma}
\begin{proof}
  Concerning the rule $\mathtt{qL}$, let us suppose that
  $\Gamma,q(F,G_1,\ldots,G_n)\der_i \Delta$ is valid. We have to prove that
  $\Gamma^{(i,j)},F, G_j^{(i,j)}\der_j \Delta^{(i,j)}$ is valid, for all $j$.
  Let us suppose that an environment $v$ is such that $\sem H v=j$ for all
  $H\in \Gamma^{(i,j)}$, $\sem F v=j$ and $\sem {G_j^{(i,j)}} v =j$. Then by Lemma
  \ref{Lemma:sym} $\sem K v=i$ for all
  $K\in \Gamma$, and $\sem {q(F,G_1,\ldots,G_n)} v =i$. From the validity of
  $\Gamma,q(F,G_1,\ldots,G_n)\der_i \Delta$ it follows that there exists $H\in \Delta$
  such that $\sem H v=i$. Hence $\sem {H^{(i,j)}} v=j$, and we are done.

  The case of  $\mathtt{qR}$ is similar, and omitted.
\end{proof}

Hence, all valid sequents admit a cut-free proof iff all valid sequents
containing only constants and decorated variables do. In fact, given a valid sequent $\Gamma\vdash_i \Delta$, one can apply 
 the rules  $\mathtt{qR}$ and $\mathtt{qL}$ as long  as possible, in whatever order.
  By Lemma \ref{lem:rev}, this produces a $n$-branching tree whose leaves are valid sequents containing only constants and decorated variables. 

A second simple observation that allows us  to restrict a bit more the set of sequents to be considered is that some instances of the weakening rules are invertible, namely:
\begin{fact}\label{fact:w}
  Weakenings of the form
  $\infer{\Gamma\der_i\Delta}
  {\Gamma,\e_i\der_i\Delta}
  {\mathtt{WeakL}}$
  and 
$\infer{\Gamma\der_i\Delta}
  {\Gamma\der_i\e_j,\Delta}
  {\mathtt{WeakR}}$, with $j\neq i$,
  are invertible.
\end{fact}

Let us call {\em atomic sequents} those containing only constants and decorated variables, and such that if they are of the form $\Gamma,\e_j\der_i\Delta$
(resp. $\Gamma\der_i\e_j,\Delta$) then $j\neq i$ (resp. $j=i$).

In order to prove that all valid sequents admit a cut-free proof
it is enough to prove that all valid atomic sequents do.
The following easy lemma allows us to get rid of constants.
\begin{lemma}\label{lem:atomic}
 All atomic sequents containing at least a constant are valid and admit a cut-free proof.
\end{lemma}
\begin{proof}
  The validity is immediate. Concerning the provability, sequents of the form
  $\Gamma\der_i \e_i,\Delta$ are proved by using the rule $\mathtt{Const}$ followed by weakenings, and those of the form $\Gamma,\e_j\der_i\Delta$ where $i\neq j$ are proved by using the rule $\mathtt{Const}$ followed by $\mathtt{NegL}$ and weakenings.
  \end{proof}

We are left with the problem of showing that, if a sequent containing only decorated variables 
is valid, then it admits a cut-free proof.  Any such sequent is of the form:
$$
X_1^{\rho^1_1},\ldots,X_1^{\rho^1_{k_1}},\ldots X_s^{\rho^s_1},\ldots,X_s^{\rho^s_{k_s}}\der_i
X_1^{\sigma^1_1},\ldots,X_1^{\sigma^1_{l_1}},\ldots X_s^{\sigma^s_1},\ldots,X_s^{\sigma^s_{l_s}}
$$
where, for all $1\leq r\leq s$, $k_r+l_r>0$. This means that each variable $X_r$ may appear only on the left hand side of the sequent, or only on the rigth hand side, or in both.

The following lemma provides a characterisation of the valid sequents containing only decorated variables.

\begin{lemma}\label{lem:valid}
  A sequent of the form 
$$
X_1^{\rho^1_1},\ldots,X_1^{\rho^1_{k_1}},\ldots X_s^{\rho^s_1},\ldots,X_s^{\rho^s_{k_s}}\der_i
X_1^{\sigma^1_1},\ldots,X_1^{\sigma^1_{l_1}},\ldots X_s^{\sigma^s_1},\ldots,X_s^{\sigma^s_{l_s}}
$$ is valid iff
$$ \exists\ r\leq s \ \forall j\in\hat n\  
  [\ (\forall m\leq k_r \ \rho^r_m(j)= i) \ \Rightarrow
    \ (\exists m\leq l_r \ \sigma^r_m(j)=i)\ ].
  $$


\end{lemma}
\begin{proof}
  It is immediate to prove the contrapositive: the given sequent is not valid iff there exists an environment assigning $i$ to all the decorated variables in the left-hand side,  and to none of those in the right-hand side. Such an environment $v$  exists iff
  for every $r\in\{1,\ldots,s\}$ there exists $v(X_r)\in\{1,\ldots n\}$ such that $\rho^r_m(v(X_r))=i$ for all
  $m\in\{1,\ldots,k_r\}$ and $\sigma^r_m(v(X_r))\neq i$ for all
  $m\in\{1,\ldots,l_r\}$. 
\end{proof}

Let us call {\em pure} a sequent of the form 
$X^{\rho_1},\ldots,X^{\rho_k}\der_iX^{\sigma_1},\ldots,X^{\sigma_l}$.

Lemma \ref{lem:valid} says that a sequent containing only decorated variables as above  is valid iff there exists $r\in\{1,\ldots,s\}$ such that
$X_r^{\rho^r_1},\ldots,X_r^{\rho^r_{k_r}}\der_iX_r^{\sigma^r_1},\ldots,X_r^{\sigma^r_{l_r}}$ is a valid pure sequent.
If all the valid pure sequents admit a cut-free proof, then also all the valid sequents containing only decorated variables admit a cut-free proof, obtained by adding the suitable weakenings to the proof of the pure valid sequent they contain.
Hence, we are left with the following bit: 

\begin{lemma}\label{lem:pure}
If a pure sequent is valid then it admits a cut-free proof.
\end{lemma}
\begin{proof}
  Let $S=X^{\rho_1},\ldots,X^{\rho_k}\der_iX^{\sigma_1},\ldots,X^{\sigma_l}$ be a valid pure sequent. For the sake of this proof, let us abbreviate ``$\exists 1\leq m\leq k \ \rho_m(j)\neq i$'' by $A(j)$ and ``$\exists 1\leq m\leq l \ \sigma_m(j)=i$'' by $B(j)$. 
 By Lemma \ref{lem:valid} we know that
  $\forall 1\leq j\leq n\  A(j)\vee B(j)$.
  
  We consider two cases:
  \begin{enumerate}
  \item $A(j)$ holds for at least one  $j\in\hat n$. Then  there exists  $m\in\{1,\ldots,k\}$ such that $\rho_m(j)\neq i$. Let $h=\rho_m^{-1}(i)\neq j$. There are two cases:
    \begin{enumerate}
    \item $A(h)$ holds. Then there exists  $m'\in\{1,\ldots,k\}$, such that $\rho_{m'}(h)\neq i$. So $\rho_{m'}^{-1}(i)\neq h= \rho_{m}^{-1}(i)$. In this case the sequent
      $X^{\rho_m},X^{\rho_{m'}} \vdash_i$ is cut-free provable  as we have shown in part (i) of  Lemma \ref{lem:neg0}. Then we conclude by
      weakening. Hence, $S$ admits a cut-free proof.
    \item  $B(h)$ holds. Then there exists  $m'\in\{1,\ldots,l\}$, such that
      $\sigma_{m'}(h)= i$. So $\sigma_{m'}^{-1}(i)=\rho_{m}^{-1}(i)$. Then
      $X^{\rho_m} \der_i X^{\sigma_{m'}}$ is the conclusion of an instance of $\mathtt{Id}$, so that
$\prov {X^{\rho_m}} i {X^{\sigma_{m'}}}$, and we conclude by
      weakening.
    \end{enumerate}
  \item $B(j)$ holds for all $j\in\hat n$.
Let $f:\hat n\rightarrow\{1,\ldots,l\}$ be such that 
$\sigma_{f(j)}(j)=i$, for $j\in\hat n$.
If we show that $\vdash_i {X^{\sigma_{f(1)}},\ldots,X^{\sigma_{f(n)}}}$ is cut-free provable we are done,
since the provability  of the whole pure sequent follows by weakening. Now consider

    $$
\infer
    {
      \infer
          {}
          {\Big\{X^{\sigma_{f(i)}}\der_j  X^{(ij)\circ\sigma_{f(1)}},\ldots,X^{(ij)\circ\sigma_{f(i-1)}},
            X^{(ij)\circ\sigma_{f(1+1)}},\ldots,X^{(ij)\circ\sigma_{f(n)}}\Big\}_{j\neq i}
          }
          {}
    }
    {\der_i X^{\sigma_{f(1)}},\ldots,X^{\sigma_{f(n)}}}
    {\mathtt{NegR}}
    $$
  \end{enumerate}
  Each of the $n-1$ premises of this instance of $\mathtt{NegR}$ is cut-free provable: let $j\neq i$ and $r\in\hat n$ be the unique truth value such that $\sigma_{f(i)}(r)=j$. Then the sequent  $X^{\sigma_{f(i)}} \der_j X^{(ij)\circ\sigma_{f(r)}}$ is the conclusion of an instance of $\mathtt{Id}$, because $ \sigma_{f(i)}^{-1}(j)  =r=((ij)\circ\sigma_{f(r)})^{-1}(j)$. The whole premise is obtained by weakening.
  
\end{proof}

Summing up:

\begin{theorem}
All valid sequents of the $n$-dimensional propositional calculus admit a cut-free proof.
\end{theorem}
\begin{proof}
  Given a valid sequent $\Gamma\vdash_i \Delta$, apply 
  $\mathtt{qR}$ and $\mathtt{qL}$ as long  as possible, in whatever order.
  By Lemma \ref{lem:rev}, this produces a $n$-branching tree whose leaves are valid sequents containing only constants and decorated variables. Let $\Gamma'\vdash_j \Delta'$
  be one of these leaves. Apply the invertible instances of $\mathtt{WeakL}$ and $\mathtt{WeakR}$ given in Fact \ref{fact:w} as long  as possible, in whatever order. The result is a valid atomic sequent $\Gamma''\vdash_j\Delta''$. If it contains at least a constant, then it admits a cut-free proof  by
Lemma \ref{lem:atomic}, otherwise it is a valid sequent containing only decorated variables. In this last case, by Lemma \ref{lem:valid},  $\Gamma''\vdash_j\Delta''$ may be obtained by suitably weakening a valid pure sequent $\Gamma'''\vdash_j\Delta'''$. By Lemma \ref{lem:pure}
$\Gamma'''\vdash_j\Delta'''$ admits a cut free proof.
The tree rooted in $\Gamma\vdash_i \Delta$ whose definition is sketched above is a cut-free $n\mathrm{LK}$ proof, and we are done.
\end{proof}


\section{Conclusion}

The sequent calculus introduced in this paper exemplifies a new proof-theoretic format for finite-valued logics with $n$ truth values, where (i) every truth-value $e_i$ has its own turnstile $\vdash_i$, and (ii) propositional variables are decorated with elements of the symmetric group ${\cal S}_n$. The former feature generalises what happens in hybrid deduction-refutation systems: in the $2$-valued case, indeed, proofs of $\vdash_2 F$ should be regarded as {\em refutations} of $F$. The latter feature, on the other hand, allows us to identify a formula $F$ and its double-negation not only as regards their semantic interpretation, but even as syntactic objects. Our calculus is sound and complete with respect to the canonical notion of semantic validity of $n$-dimensional propositional formulas. We proved that all valid sequent admit a cut-free proof, as a corollary to our syntactic proof of completeness. 
 This result falls short of a full-blooded cut elimination theorem, in that no algorithm is provided to transform proofs containing cuts into cut-free proofs. The main difficulty in obtaining it is due to the fact that the propositional variables are decorated. A constructive proof of cut elimination remains as a valuable goal for further investigations.\\
\indent 


\subsection*{Acknowledgements}
A. Ledda and F. Paoli were supported by PLEXUS, (Grant Agreement no 101086295) a Marie Sklodowska-Curie action funded by the EU under the Horizon Europe Research and Innovation Programme. They  also express their gratitude for the support of the Italian Ministry of Education, University and Research through the PRIN 2022 project "Developing Kleene Logics and their Applications" (DeKLA), project code: 2022SM4XC8."

\bibliographystyle{plain}
\bibliography{main}

\end{document}